\tikzstyle{vertex}=[circle, draw, inner sep=0pt, minimum size=6pt]
\newtheorem{theorem}{Theorem}
\newtheorem{lemma}[theorem]{Lemma}
\newtheorem{remark}[theorem]{Remark}
\newtheorem{example}{Example}
\newtheorem{corollary}[theorem]{Corollary}
\newcommand{\C}{\mathbb C}
\newcommand{\Z}{\mathbb Z}
\newcommand{\F}{\mathbb F}
\newtheorem*{rep@theorem}{\rep@title}
\newcommand{\newreptheorem}[2]{%
\newenvironment{rep#1}[1]{%
 \def\rep@title{#2 \ref{##1}}%
 \begin{rep@theorem}}%
 {\end{rep@theorem}}}
\newcommand{\ket}[1]{|{#1}\rangle}
\newcommand{\bra}[1]{\langle{#1}|}
\newcommand{\llbr}{[\![}
\newcommand{\rrbr}{]\!]}
\title{Divisible Codes for Quantum Computation}
\author{
	\IEEEauthorblockN{Jingzhen Hu$^\ast$, Qingzhong Liang$^\ast$, and Robert Calderbank} 
	\thanks{$^\ast$The first two authors contributed equally to this work.}
	\thanks{Jingzhen Hu, Qingzhong Liang, and Robert Calderbank are with the Department of Mathematics, Duke University, Durham, NC, USA. E-mail: \{jingzhen.hu, qingzhong.liang, robert.calderbank\}@duke.edu}
}
\newif\ifnotes
\begin{document}
	\maketitle
	
	\begin{abstract}
	Divisible codes are defined by the property that codeword weights share a common divisor greater than one. They are used to design signals for communications and sensing, and this paper explores how they can be used to protect quantum information as it is transformed by logical gates. Given a CSS code $\mathcal{C}$, we derive conditions that are both necessary and sufficient for a transversal diagonal physical operator $U_Z$ to preserve $\mathcal{C}$ and induce $U_L$. The group of $Z$-stabilizers in a CSS code $\mathcal{C}$ is determined by the dual of a classical $[n, k_1]$ binary code $\mathcal{C}_1$, and the group of $X$-stabilizers is determined by a classical $[n, k_2]$ binary code $\mathcal{C}_2$ that is contained in $\mathcal{C}_1$. The requirement that a diagonal physical operator $U_Z$ fixes a CSS code $\mathcal{C}$ leads to constraints on the congruence of weights in cosets of $\mathcal{C}_2$. These constraints are a perfect fit to divisible codes, and represent an opportunity to take advantage of the extensive literature on classical codes with two or three weights. We construct new families of CSS codes using cosets of the first order Reed Muller code defined by quadratic forms. We provide a simple alternative to the standard method of deriving the coset weight distributions (based on Dickson normal form) that may be of independent interest. Finally, we develop an approach to circumventing the Eastin-Knill Theorem which states that no QECC can implement a universal set of logical gates through transversal gates alone. The essential idea is to design stabilizer codes in layers, with $N_1$ inner qubits and $N_2$ outer qubits, and to assemble a universal set of fault tolerant gates on the inner qubits. 
	\end{abstract}
	
	\section{Introduction}
    \label{sec:intro}
    Quantum error-correcting codes (QECCs) protect information as it is transformed by logical gates. The aim of fault tolerance motivates designing QECCs that implement logical gates through \emph{transversal} physical gates \cite{gottesman1997stabilizer} that are tensor products of unitary operators on individual code blocks.

    We investigate diagonal physical gates that preserve CSS codes \cite{Calderbank-physreva96,calderbank1997quantum,calderbank1998quantum} inducing logical gates. Hence we work with the resolution of the identity determined by the roup of $Z$-stabilizers. 
    The group of $Z$-stabilizers in a CSS code $\mathcal{C}$ is determined by the dual of a classical $[n, k_1]$ binary code $\mathcal{C}_1$, the signs are given by a character of this group, and are determined by a binary vector $\bm{y}$, as described in Section \ref{subsec:perlim_stab}. The group of X-stabilizers is determined by a classical $[n, k_2]$ binary code $\mathcal{C}_2$ that is contained in $\mathcal{C}_1$, and we denote the CSS code $\mathcal{C}$ by CSS$(X, \mathcal{C}_2; Z, C_1^\perp; \bm{y})$. 

    A diagonal physical gate $U_Z$ has $2^n$ diagonal entries $d_{\bm{v}}$, each indexed by a binary vector $\bm{v}$ of length $n$. We introduce our \emph{Generator Coefficient Framework} in Section \ref{sec:GCs}, and use it to show that $U_Z$ preserves the CSS code $\mathcal{C}$ if and only if diagonal entries $d_{\bm{v}}$ and $d_{\bm{w}}$ are identical whenever $\bm{v}$ and $\bm{w}$ belong to the same coset of $\mathcal{C}_2$ in $\mathcal{C}_1+\bm{y}$. By constraining only $2^{k_1}$ out of $2^n$ diagonal entries (those indexed by $\mathcal{C}_1+\bm{y}$) we are able to guarantee not only that $U_Z$ preserves the codespace, but also to completely specify the induced logical operator. The effect of changing the signs of $Z$-stabilizers (changing the vector $\bm{y}$) is to shift the subset of diagonal entries that specify the interaction of $U_Z$ with the CSS code $\mathcal{C}$.

    The approach taken in prior work is to fix a transversal physical gate $U_Z$, and a target logical gate $U_L$, then to derive \emph{sufficient} conditions on a CSS code $\mathcal{C}$ for which $U_Z$ preserves $\mathcal{C}$ and induces $U_L$. The families of triorthogonal codes \cite{bravyi2005universal,bravyi2012magic} and quantum Reed Muller codes \cite{campbell2012magic,landahl2013complex} were derived in this way. In contrast, we fix a CSS code, and use our generator code framework to assemble all possible diagonal physical gates that induce a target logical gate. Our approach supports fault-tolerant architecture by enabling systematic analysis of the locality of these diagonal physical gates. Note that locality is not affected by changing the signs of $Z$-stabilizers. Our approach has the advantage that we are able to derive conditions on a CSS code $\mathcal{C}$ that are both \emph{necessary and sufficient} for a diagonal physical operator $U_Z$ to preserve $\mathcal{C}$ and induce $U_L$. In Section \ref{sec:GCs} we apply the generator coefficient framework to the broad class of Quadratic Form Diagonal (QFD) gates \cite{Rengaswamy-pra19,rengaswamy2020optimality}, which include as a special case the physical gates considered in prior work. We characterize all CSS codes, determined by classical Reed Muller codes, that are fixed by transversal $Z$-rotations through an angle $\pi/2^l$. 

    When the target logical operator is the identity, the physical gates preserving the CSS code represent idling noise to which the code is oblivious. Coherent noise, with identical $Z$-rotation angles on each physical qubit, is a significant source of error in ion trap quantum computers. A CSS code is oblivious to this type of coherent noise if and only if all Hamming weights in $\mathcal{C}_1+\bm{y}$ are identical (see \cite{hu2022mitigating,hu2021css} for more details). We expect the generator coefficient perspective to be useful in designing CSS codes that mitigate the effects of other systems impairments.
    
    The defining property of a classical divisible code \cite{Ward} is that codeword weights share a common divisor greater than one. Divisible codes appear in signal design for wireless communication, in coded radar and sonar, and in the generation of pseudorandom sequences for stream ciphers and for secure authentication (see \cite{golomb2005signal} for more details). In resilient quantum computation, the requirement that a diagonal physical operator $U_Z$ fixes a CSS$(X, \mathcal{C}_2; Z, C_1^\perp; \bm{y})$ code leads to constraints on the congruence of weights in cosets of $\mathcal{C}_2$ \cite{zeng2008semi,haah2018towers}. 

    It is simple to manage congruence of weights in classical codes when the number of weights is small, and our approach creates an opportunity to design CSS codes by taking advantage of the extensive literature on classical codes with two or three weights (see \cite{delsarte1973four,calderbank1986geometry,kohnert2007constructing,ding2015class,kiermaier2020lengths,kurz2021divisible}). In Section IV we construct new families of CSS codes using cosets of the first order Reed Muller code defined by quadratic forms. We also provide a simple alternative to the standard method of deriving coset weight distributions based on Dickson normal form (see \cite[Chapter 15]{macwilliams1977theory}). 

    The Eastin-Knill Theorem \cite{eastin2009restrictions,zeng2011transversality} reveals that no QECC can implement a universal set of logical gates through transversal gates alone. However, several approaches have been developed to circumvent this restriction. 

    Magic state injection circumvents this restriction by consuming magic states to implement non-Clifford gates. State injection \cite{gottesman1999demonstrating,zhou2000methodology} is usually accomplished through magic state distillation (MSD) \cite{bravyi2005universal,reichardt2005quantum,bravyi2012magic,anwar2012qutrit,campbell2012magic,landahl2013complex,campbell2017unified,haah2018codes,krishna2019towards,vuillot2019quantum}, which synthesizes high-fidelity magic states from multiple low-fidelity states. MSD protocols employ CSS codes where a diagonal physical gate induces a fault-tolerant non-Clifford logical gate \cite{gottesman1999demonstrating}. In this context, Bravyi and Haah \cite{bravyi2012magic} introduced the class of triorthogonal codes, where a transversal physical $T$ gate induces a transversal logical $T$ gate up to some logical Clifford gates. Generator coefficients enable analysis of more general pairings of physical and logical gates, for example the hybrid codes introduced by Vasmer and Kubica \cite{vasmer2021morphing}.

    The set of logical operators induced by transversal circuits on two different codes can be universal, and this motivates methods of switching fault-tolerantly between the two code spaces. Hill et al. \cite{hill2013fault} proposed switching between the 5-qubit stabilizer code and the Steane code, while Anderson et al. \cite{anderson2014fault} proposed switching between the Steane code and Reed-Muller codes. An alternative perspective on code switching is to implement the logical operator by fixing gauges on subsystem codes (see Paetznick and Reichardt \cite{paetznick2013universal}, and Bomb\'in \cite{bombin2015gauge}), where gauge qubits are required to do intermediate error corrections). 

    Jochym-O’Connor and R. Laflamme \cite{jochym2014using} proposed to implement a universal set of logical gates on a concatenated code by combining non-transversal physical gates with fault-tolerant recovery operations. Although the gates are not transversal on the concatenated code, the component codes do need to realize a complementary logical gate through transversal gates.

    Finally, Knill et al. \cite{knill1996accuracy} introduced a fault-tolerant controlled-Phase gate on the Steane code by decomposing a non-transversal circuit into pieces, and performing rounds of intermediate error correction to ensure fault-tolerance. Yoder et al. \cite{yoder2016universal} extended this idea of pieceable fault-tolerance to general codes, using the Toffoli and controlled-controlled-$Z$ gates to assemble a universal set of gates. They were able to introduce error correction by decomposing the non-transversal circuit, identifying the stabilizer group corresponding to the intermediate states, and measuring the stabilizers.

    Section V describes the design of stabilizer codes in layers, with $N_1$ inner qubits and $N_2$ outer qubits, with the aim of assembling a universal set of fault tolerant gates on the inner qubits. The $N_1$ inner qubits are the logical qubits of an outer CSS code on $N_2$ qubits. A transversal diagonal physical gate on $N_2$ qubits preserves the outer code, inducing a target logical gate on the inner code. The general design is introduced through an example taken from Section \ref{sec:GCs}, where the outer code is a $\llbr 31, 5, 3\rrbr$ CSS code fixed by a transversal $T$ gate. The inner code is the $\llbr 5, 1, 3\rrbr$ stabilizer code and the induced logical operator is a $T$ gate on the inner qubit. It may be useful to view this combination of codes as a factorization of a $\llbr 31, 1, 3 \rrbr$ triorthogonal code. Factorization is similar to code switching, concatenation, and pieceable fault tolerance in that it can produce a universal set of gates without requiring teleportation of magic states. The $\llbr 31,5,3\rrbr$ outer code belongs to a new $\llbr n=2^m-1,1\le k\le 1+\sum_{i=1}^{m-4}(m-i),d=3 \rrbr$ CSS code family, designed from the congruence of weights in cosets of quadratic forms. When $m=6$, the $\llbr 63,7,3\rrbr$ CSS code can serve as the outer code of the $\llbr 7,1,3\rrbr$ Steane code to support a fault-tolerant $T$ gate. However, inducing a logical gate on the inner code does require encoding/decoding algorithms to pass between inner and outer codes. The overhead of factorization depends on the complexity of these algorithms.

	\section{Preliminaries and Notation}\label{sec_prelim}
    \subsection{The Pauli Group}
        \label{subsec:perlim_Pauli}
    Let $\imath\coloneqq \sqrt{-1}$ be the imaginary unit. 
    Any $2\times 2$ Hermitian matrix can be uniquely expressed as a real linear combination of the four single qubit Pauli matrices/operators
    \begin{align}
    I_2 \coloneqq \begin{bmatrix}
    1 & 0\\
    0 & 1
    \end{bmatrix},~  
    X \coloneqq \begin{bmatrix} 
    0 & 1\\
    1 & 0
    \end{bmatrix},~  
    Z \coloneqq  \begin{bmatrix} 
    1 & 0\\
    0 & -1
    \end{bmatrix}, 
    \end{align}
    and $Y\coloneqq \imath XZ$.
    The operators satisfy 
    $
    X^2= Y^2= Z^2=I_2,~  X Y=- Y X,~  X Z=- Z X,~ \text{ and }  Y Z=- Z Y.
    $
    
    Let $\F_2 = \{0,1\}$ denote the binary field. Let $n\ge 1$ and $N=2^n$. Let $A \otimes B$ denote the Kronecker product (tensor product) of two matrices $A$ and $B$. Given binary vectors $\bm{a}=[a_1,a_2,\dots,a_n]$ and $\bm{b}=[b_1,b_2,\dots,b_n]$ with $a_i,b_j =0$ or $1$, we define the operators
    \begin{align}
    D(\bm{a},\bm{b})&\coloneqq X^{a_1} Z^{b_1}\otimes \cdots \otimes  X^{a_n} Z^{b_n},\\
    E(\bm{a},\bm{b}) &\coloneqq\imath^{\bm{a}\bm{b}^T \bmod 4}D(\bm{a},\bm{b}).
    \end{align}
    
    Note that $D(\bm{a},\bm{b})$ can have order $1,2$ or $4$, but $E(\bm{a},\bm{b})^2=\imath^{2\bm{a}\bm{b}^T}D(\bm{a},\bm{b})^2=\imath^{2ab^T}( \imath^{2\bm{a}\bm{b}^T} I_N)=I_N$. 
    We define the $n$-qubit \textit{Pauli group}
    \begin{equation}
    \mathcal{P}_N \coloneqq\{\imath^\kappa D(\bm{a},\bm{b}): \bm{a},\bm{b}\in \F_2^n, \kappa\in \Z_{4} \},
    \end{equation}
    where $\Z_{2^l} = \{0,1,\dots,2^l-1\}$.
    The $n$-qubit Pauli matrices form an orthonormal basis for the vector space of $N\times N$ complex matrices ($\C^{N\times N}$) under the normalized Hilbert-Schmidt inner product $\langle A,B\rangle \coloneqq \mathrm{Tr}(A^\dagger B)/N$ \cite{gottesman1997stabilizer}. 
    
    
    The symplectic inner product is $\langle [\bm{a},\bm{b}],[\bm{c},\bm{d}]\rangle_S=\bm{a}\bm{d}^T+\bm{b}\bm{c}^T \bmod 2$. Since $ X Z=- Z X$, we have 
    \begin{equation}
    E(\bm{a},\bm{b})E(\bm{c},\bm{d})=(-1)^{\langle [\bm{a},\bm{b}],[\bm{c},\bm{d}]\rangle_S}E(\bm{c},\bm{d})E(\bm{a},\bm{b}).
    \end{equation}
    
    We use the \textit{Dirac notation}, $|\cdot \rangle$ to represent the basis states of a single qubit in $\C^2$. For any $\bm{v}=[v_1,v_2,\cdots, v_n]\in \F_2^n$, we define $|\bm{v}\rangle=|v_1\rangle\otimes|v_2\rangle\otimes\cdots\otimes|v_n\rangle$, the standard basis vector in $\C^N$ with $1$ in the position indexed by $\bm{v}$ and $0$ elsewhere. We write the Hermitian transpose of $|\bm{v}\rangle$ as $\langle \bm{v}|=|\bm{v}\rangle^\dagger$. 

    \subsection{The Clifford Hierarchy} 
    \label{subsec:perlim_Clifford}
    The \textit{Clifford hierarchy} of unitary operators was introduced in \cite{gottesman1999demonstrating}. The first level of the hierarchy is defined to be the Pauli group $\mathcal{C}^{(1)}=\mathcal{P}_N$. For $l\ge 2$, the levels $l$ are defined recursively as 
    \begin{equation}\label{eqn:def_Cliff_hierarchy}
    \mathcal{C}^{(l)}:=\{U\in \mathbb{U}_N: U \mathcal{P}_N U^\dagger\subset \mathcal{C}^{(l-1)}\},
    \end{equation}
    where $\mathbb{U}_N$ is the group of $N\times N$ unitary matrices. The second level is the Clifford Group~\cite{Gottesman-icgtmp98}, $\mathcal{C}^{(2)}$, which can be generated using the unitaries \textit{Hadamard}, \textit{Phase}, and either of \textit{Controlled-NOT} (C$X$) or \textit{Controlled-$Z$} (C$Z$) defined respectively as
    \begin{align}
    H\coloneqq\frac{1}{\sqrt{2}}
    \begin{bmatrix}
    1 & 1\\
    1 & -1
    \end{bmatrix},~
    P\coloneqq\begin{bmatrix}
    1 & 0\\
    0 & \imath
    \end{bmatrix},~
    \text{C}Z_{ab}  \coloneqq \ket{0}\bra{0}_a \otimes (I_2)_b + \ket{1}\bra{1}_a \otimes Z_b,~
    \text{C}X_{a \rightarrow b}  \coloneqq \ket{0}\bra{0}_a \otimes (I_2)_b + \ket{1}\bra{1}_a \otimes X_b.
    \end{align}
    The Clifford group in combination with \emph{any} unitary from a higher level can be used to approximate any unitary operator arbitrarily well~\cite{boykin1999universal,nebe2001invariants}. 
    Hence, they form a universal set for quantum computation. A widely used choice for the non-Clifford unitary is the $T$ gate in the third level defined by $T=Z^{\frac{1}{4}}\equiv e^{-\frac{\imath\pi}{8} Z}.$
    
    Let $\mathcal{D}_N$ be the $N\times N$ diagonal matrices, and $\mathcal{C}^{(l)}_d \coloneqq \mathcal{C}^{(l)} \cap \mathcal{D}_N$. The diagonal gates at each level in the hierarchy form a group, but for $l \ge 3$, the gates in $\mathcal{C}^{(l)}$ no longer form a group.
    Note that $\mathcal{C}^{(l)}_d$ can be generated using the ``elementary" unitaries  C$^{(0)}Z^{\frac{1}{2^{l}}}$, C$^{(1)}Z^{\frac{1}{2^{l-1}}}, \dots, $C$^{(l-2)}Z^{\frac{1}{2}}$, C$^{(l-1)} Z$ \cite{zeng2008semi}, where C$^{(i)}Z^{\frac{1}{2^j}} \coloneqq \sum_{\bm{u}\in \F_2^{i+1}}\ket{\bm{u}}\bra{\bm{u}} +e^{\imath \frac{\pi}{2^j}}\ket{\bm{1}}\bra{\bm{1}}$ and $\bm{1}\in\F_2^{i+1}$ denotes the vector with every entry $1$. In general, the length of $\bm{1}$ can change and should be clear in the context.

\subsection{Stabilizer Codes}
    \label{subsec:perlim_stab}
    We define a stabilizer group $\mathcal{S}$ to be a commutative subgroup of the Pauli group $\mathcal{P}_N$, where every group element is Hermitian and no group element is $-I_N$. We say $\mathcal{S}$ has dimension $r$ if it can be generated by $r$ independent elements as $\mathcal{S}=\langle \nu_i E(\bm{c_i},\bm{d_i}): i=1,\dots, r \rangle$, where $\nu_i\in\{\pm1\}$ and $\bm{c_i},\bm{d_i}\in \F_2^n$. Since $\mathcal{S}$ is commutative, we must have $\langle [\bm{c_i},\bm{d_i}],[\bm{c_j},\bm{d_j}]\rangle_S=\bm{c_i}\bm{d_j}^T+\bm{d_i}\bm{c_j}^T=0\bmod 2$.
    
    Given a stabilizer group $\mathcal{S}$, the corresponding \textit{stabilizer code} \cite{Calderbank-physreva96,calderbank1997quantum,gottesman1997stabilizer,calderbank1998quantum} is the fixed subspace $\mathcal{V}(\mathcal{S)}:=\{|\psi\rangle \in \C^N: g|\psi\rangle=|\psi\rangle \text{ for all } g\in \mathcal{S} \}$. 
    We refer to the subspace $\mathcal{V}(\mathcal{S})$ as an $\left[\left[n,k,d\right]\right]$ stabilizer code because it encodes $k:=n-r$ \text{logical} qubits into $n$ \textit{physical} qubits. The minimum distance $d$ is defined to be the minimum weight of any operator in $\mathcal{N}_{\mathcal{P}_N}\left(\mathcal{S}\right)\setminus \mathcal{S}$. Here, the weight of a Pauli operator is the number of qubits on which it acts non-trivially (i.e., as $ X,~Y$ or $ Z$), and $\mathcal{N}_{\mathcal{P}_N}\left(\mathcal{S}\right)$ denotes the normalizer of $\mathcal{S}$ in $\mathcal{P}_N$. 
    
    For any Hermitian Pauli matrix $E\left(\bm{c},\bm{d}\right)$ and $\nu\in\{\pm 1\}$, the operator $\frac{I_N+\nu E\left(\bm{c},\bm{d}\right)}{2}$ projects onto the $\nu$-eigenspace of $E\left(\bm{c},\bm{d}\right)$. Thus, the projector onto the codespace $\mathcal{V}(\mathcal{S})$ of the stabilizer code defined by $\mathcal{S}=\langle \nu_i E\left(\bm{c_i},\bm{d_i}\right): i=1,\dots, r \rangle$ is 
    \begin{equation}
    \Pi_{\mathcal{S}}=\prod_{i=1}^{r}\frac{\left(I_N+\nu_i E\left(\bm{c_i},\bm{d_i}\right)\right)}{2}=\frac{1}{2^r}\sum_{j=1}^{2^r}\epsilon_j E\left(\bm{a_j},\bm{b_j}\right),
    \end{equation}
    where $\epsilon_j\in \{\pm 1 \}$ is a character of the group $\mathcal{S}$, and is determined by the signs of the generators that produce $E(\bm{a_j},\bm{b_j})$: $\epsilon_jE\left(\bm{a_j},\bm{b_j}\right)=\prod_{t\in J\subset \{1,2,\dots,r\} } \nu_t E\left(\bm{c_t},\bm{d_t}\right)$ for a unique $J$.
    
    A \textit{CSS code} is a particular type of stabilizer code with generators that can be separated into strictly $X$-type and strictly $Z$-type operators. Consider two classical binary codes $\mathcal{C}_1,\mathcal{C}_2$ such that $\mathcal{C}_2\subset \mathcal{C}_1$, and let $\mathcal{C}_1^\perp$, $\mathcal{C}_2^\perp$ denote the dual codes. Note that $\mathcal{C}_1^\perp\subset \mathcal{C}_2^\perp$. Suppose that $\mathcal{C}_2 = \langle \bm{c_1},\bm{c_2},\dots,\bm{c_{k_2}} \rangle$ is an $[n,k_2]$ code and $\mathcal{C}_1^\perp =\langle \bm{d_1},\bm{d_2}\dots,\bm{d_{n-k_1}}\rangle$ is an $[n,n-k_1]$ code.
    Then, the corresponding CSS code has the stabilizer group 
	\begin{align*}
	\mathcal{S} 
	&=\langle \nu_{(\bm{c_i},\bm{0})} E\left(\bm{c_i},\bm{0}\right), \nu_{(\bm{0},\bm{d_j})} E\left(\bm{0},\bm{d_j}\right)\rangle_{i=1;~j=1}^{i=k_2;~j=n-k_1} \nonumber\\
	&=\{\epsilon_{(\bm{a},\bm{0})} \epsilon_{(\bm{0},
	\bm{b})} E\left(\bm{a},\bm{0}\right)E\left(\bm{0},\bm{b}\right): \bm{a}\in \mathcal{C}_2, \bm{b}\in \mathcal{C}_1^\perp\}, 
	\end{align*}
	where $\nu_{(\bm{c_i},\bm{0})},\nu_{(\bm{0},\bm{d_j})},\epsilon_{(\bm{a},\bm{0})},\epsilon_{(\bm{0},
	\bm{b})}  \in\{\pm 1 \}$.
	We capture sign information through character vectors 
	$\bm{y}\in \F_2^n/\mathcal{C}_1,\bm{r} \in \F_2^n/\mathcal{C}_2^\perp$ such that for any $ \epsilon_{(\bm{a},\bm{0})} \epsilon_{(\bm{0},\bm{b})} E\left(\bm{a},\bm{0}\right)E\left(\bm{0},\bm{b}\right) \in S$, we have $ \epsilon_{(\bm{a},\bm{0})} = (-1)^{\bm{a}\bm{r}^T}$ and $ \epsilon_{(\bm{0},\bm{b})} = (-1)^{\bm{b}\bm{y}^T}$. 
	If $\mathcal{C}_1$ and $\mathcal{C}_2^\perp$ can correct up to $t$ errors, then $S$ defines an $\left[\left[n, k=k_1-k_2, d\right]\right]$ CSS code with $d\ge 2t+1$, which we will represent as CSS($X,\mathcal{C}_2,\bm{r};Z,\mathcal{C}_1^\perp, \bm{y}$). If $G_2$ and $G_1^\perp$ are the generator matrices for $\mathcal{C}_2$ and $\mathcal{C}_1^\perp$ respectively, then the $(n-k_1+k_2)\times (2n)$ matrix
	\begin{equation}
	G_{\mathcal{S}}=\left[\begin{array}{c|c}
	G_2 &  \\ \hline
	&  G_1^\perp
	\end{array} \right]
	\end{equation}
	generates $\mathcal{S}$. 
	The encoding map $g_e:\ket{\bm{\alpha}}_L\in \mathbb{F}_2^k \to \ket{\overline{\bm{\alpha}}}\in \mathcal{V}(\mathcal{S})$ of a CSS($X,\mathcal{C}_2,\bm{r};Z,\mathcal{C}_1^\perp, \bm{y}$) code can be written as \cite{hu2021designing} 
		\begin{equation} \label{eqn:gen_encode_map}
		\ket{\overline{\bm{\alpha}}}
		\coloneqq \frac{1}{\sqrt{|\mathcal{C}_2|}} \sum_{\bm{x}\in \mathcal{C}_2} (-1)^{\bm{x}\bm{r}^T}\ket{\bm{\alpha}  G_{\mathcal{C}_1/\mathcal{C}_2} \oplus \bm{x} \oplus \bm{y}}.
	\end{equation}
	We define the generating set $\{X^L_j, Z^L_j \in \mathcal{HW}_{2^k} : j = 1, \dots k=k_1-k_2\}$ for the logical Pauli operators by the actions 
	\begin{align}
	X_j^L \ket{\bm{\alpha}}_L = \ket{\bm{\alpha'}}_L, \text{ where } \alpha'_i = 
	\left\{ \begin{array}{lc}
	\alpha_i,  &\text{ if } i \neq j, \\
	\alpha_i \oplus 1, & \text{ if } i = j,
	\end{array}\right.
	\end{align}
	and $Z_j^L \ket{\bm{\alpha}}_L = (-1)^{\alpha_j}\ket{\bm{\alpha}}_L.$
	Let $\bar{X}_j, \bar{Z}_j$ be the $n$-qubit operators which are physical representatives of $X_j^L, Z^L_j$ for $j= 1,\dots,k $. Then $\bar{X}_j, \bar{Z}_j$ commute with the stabilizer group $S$ and satisfy
	\begin{align}\label{eqn:logical_condition}
	\bar{X}_i \bar{Z}_j = 	
	\left\{ 
	\begin{array}{lc}
	\bar{Z}_j \bar{X}_i,  &\text{ if } i \neq j, \\
	-\bar{Z}_j \bar{X}_i, &\text{ if } i = j.
	\end{array}\right. 
	\end{align}
	Given the choice of $G_{\mathcal{C}_1/\mathcal{C}_2}$, there exists a unique set of vectors $\{\bm{\gamma_1},\cdots,\bm{\gamma_k} \in \mathcal{C}_2^\perp: G_{\mathcal{C}_1/\mathcal{C}_2}\bm{\gamma_i} = \bm{e_i} \text{ for all } i = 1,\dots,k \}$, where $\{\bm{e_i}\}_{i =1,\dots,k}$ is the standard basis of $\F_2^k$.
	If $\bm{\gamma_i}$ is the $i$-the row of generator matrix $G_{\mathcal{C}_2^\perp / \mathcal{C}_1^\perp}$, then 
	\begin{equation}\label{eqn:log_req}
	G_{\mathcal{C}_1/\mathcal{C}_2} G_{\mathcal{C}_2^\perp /\mathcal{C}_1^\perp }^T = I_{k}.
	\end{equation}
	Assume $\bm{w_i}, \bm{\gamma_i}$ are the $i$-th rows of the above coset generator matrices $G_{\mathcal{C}_1/\mathcal{C}_2}$, $G_{\mathcal{C}_2^\perp / \mathcal{C}_1^\perp}$ respectively. 
	Then, we can choose 
		$\bar{X}_{i} = E(\bm{w_i},\bm{0})$
	and 
	\begin{equation}\label{eqn:css_log_paulis}
		\bar{Z}_{i} = (-1)^{\bm{\gamma_i}\bm{y}^T}E(\bm{0},\bm{\gamma_i}) 
		\coloneqq \epsilon_{(\bm{0},\bm{\gamma_i})}E(\bm{0},\bm{\gamma_i}).
	\end{equation}

	Since we consider diagonal gates, the signs of $X$-stabilizers do not matter, and we assume $\bm{r}=\bm{0}$ throughout this paper. 
    
    \section{Generator Coefficients}\label{sec:GCs}
We review the \emph{Generator Coefficient Framework} which describes the evolution of stabilizer code states under a physical diagonal gate $U_Z = \sum_{\bm{u}\in \F_2^n} d_{\bm{u}}\ket{\bm{u}}\bra{\bm{u}}$ (See \cite{hu2021designing} for more details) under matrix representation. Note that $\ket{\bm{u}}\bra{\bm{u}} = \frac{1}{2^n}\sum_{\bm{v}\in\F_2^n} (-1)^{\bm{uv}^T} E(\bm{0},\bm{v})$. Alternatively, we may expand $U_Z$ in the Pauli basis 
    \begin{align}
        U_Z = \sum_{\bm{v}\in\F_2^n} f(\bm{v}) E(\bm{0},\bm{v}),
    \end{align}
    where 
    \begin{align}\label{eqn:coeff_UZ}
        f(\bm{v}) = \frac{1}{2^n} \sum_{\bm{u}\in\F_2^n} (-1)^{\bm{uv}^T} d_{\bm{u}}. 
    \end{align}
    The Hadamard gate $H_{2^n}$ connects the coefficients in the standard basis with those in the Pauli basis as follows
    \begin{align}
        \left[f(\bm{v})\right]_{\bm{v}\in\F_2^n} =  \left[d_{\bm{u}}\right]_{\bm{u}\in\F_2^n} H_{2^n}, 
    \end{align}
    where $H=\frac{1}{\sqrt{2}}\left(\ket{0}\bra{0}+\ket{0}\bra{1}+\ket{1}\bra{0}-\ket{1}\bra{1}\right)$ and $H_{2^n} = H \otimes H_{2^{n-1}} = H^{\otimes n}$ is the Hadamard gate.
    
    We consider the average logical channel induced by $U_Z$ on an $\llbr n,k,d\rrbr$ CSS($X,\mathcal{C}_2;Z,\mathcal{C}_1^\perp, \bm{y}$) code resulting from the four steps
    : (1) preparing any code state $\rho_1$; (2) applying a diagonal physical gate $U_Z$ to obtain $\rho_2$; (3) using $X$-stabilizers to measure $\rho_2$ (we only consider $Z$-errors as the same reasons in \cite{bravyi2005universal,bravyi2012magic}), to obtain the syndrome $\bm{\mu}$ with probability $p_{\bm{\mu}}$, and the post-measurement state $\rho_3$; (4) applying a Pauli correction to $\rho_3$, to obtain $\rho_4$. The correction might induce some undetectable $Z$-logical $\epsilon_{(\bm{0},\bm{\gamma_{\mu}})}E(\bm{0},\bm{\gamma_{\mu}})$ with $\bm{\gamma}_{\bm{0}}=\bm{0}$. 
    Let $B_{\bm{\mu}}$ be the effective physical operator corresponding to the syndrome $\bm{\mu}$. Then the evolution of code states can be described as 
    \begin{align}
        \rho_4= \sum_{\bm{\mu}\in \F_2^n/\mathcal{C}_2^\perp} B_{\bm{\mu}} \rho_1 B_{\bm{\mu}}^\dagger. 
    \end{align}
    The generator coefficients $A_{\bm{\mu},\bm{\gamma}}$ are obtained by expanding the logical operator $B_{\bm{\mu}}$ in terms of $Z$-logical Pauli operators $\epsilon_{(\bm{0},\bm{\gamma})}E(\bm{0},\bm{\gamma})$,
    \begin{align}\label{eqn:kraus_ops}
    B_{\bm{\mu}}=\epsilon_{(\bm{0},\bm{\gamma_{\mu}})}E(\bm{0},\bm{\gamma_{\mu}}) \sum_{\bm{\gamma}\in \mathcal{C}_2^\perp / \mathcal{C}_1^\perp} A_{\bm{\mu},\bm{\gamma}}~ \epsilon_{(\bm{0},\bm{\gamma})}E(\bm{0},\bm{\gamma}),
    \end{align}
    where $\epsilon_{(\bm{0},\bm{\gamma_{\mu}})}E(\bm{0},\bm{\gamma_{\mu}})$ models the $Z$-logical Pauli correction introduced by a decoder. 
    For each pair of an $X$-syndrome $\bm{\mu} \in \F_2^n / \mathcal{C}_2^\perp$ and a $Z$-logical $\bm{\gamma} \in \mathcal{C}_2^\perp / \mathcal{C}_1^\perp$, the generator coefficient $A_{\bm{\mu},\bm{\gamma}}$ corresponding to $U_Z$ is
    \begin{equation}\label{eqn:def_GC_UZ}
	    A_{\bm{\mu},\bm{\gamma}}\coloneqq \sum_{\bm{z}\in \mathcal{C}_1^\perp+\bm{\mu}+\bm{\gamma}}\epsilon_{(\bm{0},\bm{z})}f(\bm{z}),
	\end{equation}
	    where $\epsilon_{(\bm{0},\bm{z})} = (-1)^{\bm{z}\bm{y}^T}$ is the sign of the $Z$-stabilizer $E(\bm{0},\bm{z})$. 
	   The chosen $Z$-logicals and $X$-syndromes are not unique, but different choices only differ by a global phase. Generator coefficients use the CSS code to organize the Pauli coefficients of $U_Z$ into groups and to balance them by tuning the signs of $Z$-stabilizers. 
	We use \eqref{eqn:coeff_UZ} to simplify \eqref{eqn:def_GC_UZ} as
	\begin{align}
	    A_{\bm{\mu},\bm{\gamma}}
	    &= \frac{1}{2^n} \sum_{\bm{u}\in\F_2^n} \sum_{\bm{z}\in \mathcal{C}_1^\perp+\bm{\mu}+\bm{\gamma}} (-1)^{\bm{zy}^T} (-1)^{\bm{zu}^T} d_{\bm{u}} \nonumber\\
	    &=\frac{1}{|\mathcal{C}_1|}\sum_{\bm{u}\in \mathcal{C}_1} (-1)^{(\bm{\mu}\oplus\bm{\gamma})\bm{u}^T}  d_{\bm{u}\oplus\bm{y}},\label{eqn:GC_C1_y}
	\end{align}
	where $|\mathcal{C}_1| = 2^{k_1}$ is the size of $\mathcal{C}_1$. 
	We organize the generator coefficients in a matrix $M_{(\F_2^n/\mathcal{C}_2^\perp,\mathcal{C}_2^\perp/\mathcal{C}_1^\perp)}$ with rows indexed by $X$-syndromes and columns by $Z$-logicals,
	\begin{align}\label{eqn:GC_matrix_rep_mu&gamma}
	M_{(\F_2^n/\mathcal{C}_2^\perp,\mathcal{C}_2^\perp/\mathcal{C}_1^\perp)} = 
	\begin{bmatrix}
	[A_{\bm{\mu}=\bm{0},\bm{\gamma}}]_{\bm{\gamma}\in\mathcal{C}_2^\perp/\mathcal{C}_1^\perp}\\
	[A_{\bm{\mu}=\bm{\mu_1},\bm{\gamma}}]_{\bm{\gamma}\in\mathcal{C}_2^\perp/\mathcal{C}_1^\perp} \\
	\vdots\\
	[A_{\bm{\mu}=\bm{\mu_{2^{k_2}-1}},\bm{\gamma}}]_{\bm{\gamma}\in\mathcal{C}_2^\perp/\mathcal{C}_1^\perp}
	\end{bmatrix}_{\bm{\mu}\in \F_2^n/\mathcal{C}_2^\perp}.
	\end{align}
	For fixed $\bm{\mu}\in \F_2^n/\mathcal{C}_2^\perp$,
	\begin{align} \label{eqn:fix_syndrome_GC}
	  [A_{\bm{\mu},\bm{\gamma}}]_{\bm{\gamma}\in\mathcal{C}_2^\perp/\mathcal{C}_1^\perp}
	    = 
	   \frac{1}{|C_1|} [d_{\bm{u}\oplus\bm{y}}]_{\bm{u}\in\mathcal{C}_1} H^{\bm{\mu}}_{(\mathcal{C}_1,\mathcal{C}_2^\perp/\mathcal{C}_1^\perp)},
	 \end{align}
	 where $H^{\bm{\mu}}_{(\mathcal{C}_1,\mathcal{C}_2^\perp/\mathcal{C}_1^\perp)} = [(-1)^{(\bm{\mu}\oplus\bm{\gamma})\bm{u}^T}]_{{\bm{u}\in\mathcal{C}_1},\bm{\gamma}\in \mathcal{C}_2^\perp/\mathcal{C}_1^\perp}$.
	\begin{theorem}[Theorem 7 in \cite{hu2021designing}]\label{thm:general_invar}
	The physical gate $U_Z= \sum_{\bm{u}\in \F_2^n} d_{\bm{u}}\ket{\bm{u}}\bra{\bm{u}}$ preserves a CSS($X,\mathcal{C}_2;Z,\mathcal{C}_1^\perp, \bm{y}$) codespace if and only if 
	\begin{equation}\label{eqn:preserved_by_Uz}
		\sum_{\bm{\gamma}\in \mathcal{C}_2^\perp/\mathcal{C}_1^\perp} |A_{\bm{0},\bm{\gamma}}|^2=
		\sum_{\bm{\gamma}\in \mathcal{C}_2^\perp/\mathcal{C}_1^\perp}\overline{A_{\bm{0},\bm{\gamma}}}A_{\bm{0},\bm{\gamma}}=1.
	\end{equation}
	Here, $|\cdot|$ denotes the complex norm. 
	\end{theorem}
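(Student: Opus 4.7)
The plan is to view the theorem as a Parseval-type identity plus an interpretation of which ``coordinates'' of that identity correspond to the trivial syndrome. Concretely, I will first show that the global sum
\begin{equation*}
\sum_{\bm{\mu}\in \F_2^n/\mathcal{C}_2^\perp}\,\sum_{\bm{\gamma}\in \mathcal{C}_2^\perp/\mathcal{C}_1^\perp} |A_{\bm{\mu},\bm{\gamma}}|^2 = 1
\end{equation*}
holds automatically for every unitary $U_Z$, and then show that the restricted sum $\sum_{\bm{\gamma}}|A_{\bm{0},\bm{\gamma}}|^2$ coincides with the probability that a code state stays in the codespace after $U_Z$ is applied. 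Preservation of the codespace is then equivalent to concentrating all the ``mass'' on the $\bm{\mu}=\bm{0}$ row of the generator coefficient matrix, which forces the restricted sum to equal $1$, and vice versa.

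For the global identity I would start from \eqref{eqn:GC_C1_y} and observe that as $\bm{\mu}$ runs over $\F_2^n/\mathcal{C}_2^\perp$ and $\bm{\gamma}$ runs over $\mathcal{C}_2^\perp/\mathcal{C}_1^\perp$, the combination $\bm{\mu}\oplus\bm{\gamma}$ runs exactly once through $\F_2^n/\mathcal{C}_1^\perp$, yielding the $|\mathcal{C}_1|=2^{k_1}$ distinct characters of $\mathcal{C}_1$. Applying character orthogonality on $\mathcal{C}_1$ and using unitarity of $U_Z$ (so $|d_{\bm u}|=1$ for every $\bm{u}$) gives
\begin{equation*}
\sum_{\bm{\mu},\bm{\gamma}} |A_{\bm{\mu},\bm{\gamma}}|^2 \;=\; \frac{1}{|\mathcal{C}_1|^2}\sum_{\bm{c}\in \F_2^n/\mathcal{C}_1^\perp}\Bigl|\sum_{\bm{u}\in \mathcal{C}_1}(-1)^{\bm{c}\bm{u}^T} d_{\bm{u}\oplus\bm{y}}\Bigr|^2 \;=\; \frac{1}{|\mathcal{C}_1|}\sum_{\bm{u}\in \mathcal{C}_1}|d_{\bm{u}\oplus\bm{y}}|^2 \;=\; 1.
\end{equation*}
The signs $\epsilon_{(\bm{0},\bm{z})}$ only rearrange the characters and so do not affect the moduli.

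For the equivalence, I would use the Kraus description already set up around \eqref{eqn:kraus_ops}: on the codespace the effective operator associated to syndrome $\bm{\mu}$ is a linear combination of the $Z$-logical Paulis $\epsilon_{(\bm{0},\bm{\gamma})}E(\bm{0},\bm{\gamma})$ with coefficients $A_{\bm{\mu},\bm{\gamma}}$, and these logical Paulis are linearly independent on the codespace (they form a basis for the diagonal logical operators). Hence $B_{\bm{\mu}}\equiv 0$ on the codespace if and only if $A_{\bm{\mu},\bm{\gamma}}=0$ for every $\bm{\gamma}$. Now $U_Z$ preserves the codespace iff $U_Z\ket{\psi}$ lies in the codespace for every code state, iff after measuring the $X$-stabilizers only the trivial syndrome occurs, iff $B_{\bm{\mu}}=0$ on the codespace for all $\bm{\mu}\neq \bm{0}$. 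By the preceding sentence this is equivalent to $\sum_{\bm{\mu}\neq\bm{0}}\sum_{\bm{\gamma}}|A_{\bm{\mu},\bm{\gamma}}|^2=0$, which by the global identity is in turn equivalent to $\sum_{\bm{\gamma}}|A_{\bm{0},\bm{\gamma}}|^2=1$.

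I expect the main obstacle to be Step~1: bookkeeping the double coset structure $(\bm{\mu},\bm{\gamma})\in \F_2^n/\mathcal{C}_2^\perp\times \mathcal{C}_2^\perp/\mathcal{C}_1^\perp$ and reindexing it as $\F_2^n/\mathcal{C}_1^\perp$ so that the character orthogonality on $\mathcal{C}_1$ can be applied cleanly. Once this Parseval identity is in place, the equivalence collapses to the observation that nonnegative terms summing to $1$ leave no ``mass'' for the non-trivial syndromes exactly when the $\bm{\mu}=\bm{0}$ row already saturates the total.
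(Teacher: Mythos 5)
Your proof is correct and follows the same underlying idea as the paper: preservation of the codespace is equivalent to the entire post-measurement ``mass'' concentrating on the trivial syndrome $\bm{\mu}=\bm{0}$, which via the Parseval/orthogonality normalization is equivalent to \eqref{eqn:preserved_by_Uz}. The paper's stated proof is a one-liner (``$B_{\bm{0}}$ must be unitary'') that delegates the normalization $\sum_{\bm{\mu},\bm{\gamma}}|A_{\bm{\mu},\bm{\gamma}}|^2=1$ and the equivalence with $[A_{\bm{\mu}\neq\bm{0},\bm{\gamma}}]=\bm{0}$ to cited results in \cite{hu2021designing}; you reconstruct exactly that missing content, and your reindexing $(\bm{\mu},\bm{\gamma})\mapsto\bm{\mu}\oplus\bm{\gamma}$ from $(\F_2^n/\mathcal{C}_2^\perp)\times(\mathcal{C}_2^\perp/\mathcal{C}_1^\perp)$ onto $\F_2^n/\mathcal{C}_1^\perp$ together with character orthogonality on $\mathcal{C}_1$ is the right bookkeeping. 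One small imprecision in your overview sentence: $\sum_{\bm{\gamma}}|A_{\bm{0},\bm{\gamma}}|^2=\tfrac{1}{2^k}\mathrm{Tr}\bigl(B_{\bm{0}}^\dagger B_{\bm{0}}\big|_{\mathcal V}\bigr)$ is the trivial-syndrome probability for the maximally mixed code state, not for an arbitrary code state $\ket{\psi}$ (for a pure code state it also picks up cross terms $\overline{A_{\bm{0},\bm{\gamma}'}}A_{\bm{0},\bm{\gamma}}$). This does not affect your actual argument, since the implication chain you use (preserves codespace $\Leftrightarrow$ $B_{\bm{\mu}}\!\equiv\!0$ on $\mathcal V$ for $\bm{\mu}\neq\bm{0}$ $\Leftrightarrow$ $A_{\bm{\mu}\neq\bm{0},\bm{\gamma}}=0$ $\Leftrightarrow$ \eqref{eqn:preserved_by_Uz}) never relies on that interpretation; you also correctly use, implicitly, that a diagonal $U_Z$ automatically commutes with all $Z$-stabilizers, so the trivial $X$-syndrome is both necessary and sufficient for landing back in $\mathcal{V}(\mathcal{S})$.
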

	\begin{proof}
	Invariance of the codespace is equivalent to requiring the effective physical operator corresponding to the trivial syndrome $B_{\bm{\mu}=\bm{0}}$ to be unitary. 
	\end{proof}
	Note that \eqref{eqn:preserved_by_Uz} is also equivalent to $[A_{\bm{\mu}\neq \bm{0},\bm{\gamma}}]_{\bm{\gamma}\in \mathcal{C}_2^\perp/\mathcal{C}_1^\perp} = \bm{0}$ \cite[Theorem 6]{hu2021designing}. 
	The induced logical operator is 
	\begin{align}
	    U_Z^L 
		&= \sum_{\bm{\alpha}\in \F_2^k} A_{\bm{0}, g(\bm{\alpha})} E(\bm{0},\bm{\alpha}) \nonumber \\
		&= \frac{1}{|\mathcal{C}_1|}\sum_{\bm{\alpha}\in \F_2^k} \sum_{\bm{u}\in \mathcal{C}_1} (-1)^{g(\bm{\alpha}) \bm{u}^T}  d_{\bm{u}\oplus\bm{y}} E(\bm{0},\bm{\alpha}),\label{eqn:dig_log_op}
	\end{align}
	where $g: \F_2^k \to \mathcal{C}_2^\perp / \mathcal{C}_1^\perp $ is a bijective map defined by $g(\bm{\alpha}) =\bm{\alpha} G_{\mathcal{C}_2^\perp /\mathcal{C}_1^\perp}$. Here, $G_{\mathcal{C}_2^\perp /\mathcal{C}_1^\perp}$ is one choice of the generator matrix of $Z$-logicals (coset representatives of $\mathcal{C}_2^\perp /\mathcal{C}_1^\perp$). 
	\begin{example}\label{exam:15_1_3}
	The $\llbr 15,1,3\rrbr$ punctured quantum Reed-Muller code \cite{bravyi2005universal} is a CSS($X, \mathcal{C}_2; Z, \mathcal{C}_1^\perp, \bm{y}=\bm{0})$ code, where $\mathcal{C}_2$ is generated by the degree one monomials, $x_1, x_2, x_3, x_4$, and $\mathcal{C}_1^\perp$ = $\langle x_1, x_2, x_3, x_4, x_1x_2, x_1x_3, x_1x_4, x_2x_3, x_2x_4, x_3x_4\rangle$, with the first coordinate removed in both $\mathcal{C}_2$ and $\mathcal{C}_1^\perp$. It's also a triorthogonal code for which a physical transversal $T$ gate, $U_Z = \sum_{\bm{u}\in\F_2^n} \left(e^{\imath\pi/4}\right)^{w_H(\bm{u})}\ket{\bm{u}}\bra{\bm{u}}$, induces a logical transversal $T$ gate up to some Clifford gates. Here, $w_H(\bm{u}) = \bm{u}\bm{u}^T$ denotes the Hamming weight of the binary vector $\bm{u}$. Note that $\mathcal{C}_1$ is the classical punctured RM$(1,4)$ code with weight distribution given in Table \ref{tab:15_1_3_C_1} below.
    \begin{table}[h!]
    \vspace{-10pt}
     \caption{The weight distribution of $\mathcal{C}_1$ for the $\llbr15,1,3\rrbr$ code}
     \renewcommand{\arraystretch}{1.1} 
        \centering
        \begin{tabular}{|c|c|c|c|c|}
        \hline
            weight & 0 & 7 & 8 & 15  \\
            \hline
            multiplicity & 1 & 15 & 15 & 1 \\
            \hline
        \end{tabular}
        \label{tab:15_1_3_C_1}
    \end{table}
    
    Then, $d_{\bm{u}} = 1$ for $\bm{u}\in\mathcal{C}_1$ satisfying $w_H(\bm{u})= 0$ or $8$, and $d_{\bm{u}} = e^{-\imath\pi/4}$ for $\bm{u}\in\mathcal{C}_1$ satisfying $w_H(\bm{u})= 7$ or $15$. Since the $Z$-logical $\bm{\gamma}=\bm{1}$, the all-one vector, it only changes the signs of $d_{\bm{u}}$ with odd weight. It follows from \eqref{eqn:dig_log_op} that the induced logical operator is
    \begin{equation}
        U_Z^L = \frac{16}{32}(1 + e^{-\imath\pi/4})E(0,0) + \frac{16}{32}(1 - e^{-\imath\pi/4})E(0,1) = T^\dagger.
    \end{equation}
    
	\end{example}
	\begin{remark}
	  It follows from \eqref{eqn:dig_log_op} that the induced logical operator is completely specified by $|\mathcal{C}_1|$ diagonal entries in the physical gate $U_Z$. If we choose a CSS code and target a particular logical gate, then the constraints on the corresponding physical gates only apply to the diagonal elements corresponding to the coset $\mathcal{C}_1+\bm{y}$. 
	\end{remark}
	
    Given a CSS code, the generator coefficient framework not only represents when a physical diagonal gate preserves the codespace, but it also characterizes all the possible physical gates that realize a target diagonal logical gate. We start from the simplest case, when the logical operator is the identity. 
    
    \begin{lemma}\label{lemma:log_identity}
    The physical gate $U_Z= \sum_{\bm{u}\in \F_2^n} d_{\bm{u}}\ket{\bm{u}}\bra{\bm{u}}$ acts as the logical identity on the CSS($X,\mathcal{C}_2;Z,\mathcal{C}_1^\perp, \bm{y}$) codespace if and only if $d_{\bm{u}\oplus\bm{y}}$ are the same for all $\bm{u}\in\mathcal{C}_1$.
    \end{lemma}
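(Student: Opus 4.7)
The plan is to reduce the assertion to a Fourier-vanishing condition on the tuple $(d_{\bm{u}\oplus\bm{y}})_{\bm{u}\in\mathcal{C}_1}$ by combining Theorem \ref{thm:general_invar} with the explicit formula \eqref{eqn:GC_C1_y} for the generator coefficients. Saying that $U_Z$ acts as the logical identity amounts to two demands: (a) the codespace is preserved, which by the remark following Theorem \ref{thm:general_invar} is equivalent to $A_{\bm{\mu},\bm{\gamma}}=0$ for every $\bm{\mu}\neq\bm{0}$; and (b) the induced logical operator in \eqref{eqn:dig_log_op} is a scalar multiple of the identity, which requires $A_{\bm{0},\bm{\gamma}}=0$ for every nonzero $\bm{\gamma}\in\mathcal{C}_2^\perp/\mathcal{C}_1^\perp$. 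Invoking Theorem \ref{thm:general_invar} once more, codespace preservation then forces $|A_{\bm{0},\bm{0}}|=1$.

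Next, I would observe that, because $\mathcal{C}_1^\perp\subset\mathcal{C}_2^\perp$, the map $(\bm{\mu},\bm{\gamma})\mapsto\bm{\mu}\oplus\bm{\gamma}$ is a bijection from $(\F_2^n/\mathcal{C}_2^\perp)\times(\mathcal{C}_2^\perp/\mathcal{C}_1^\perp)$ onto $\F_2^n/\mathcal{C}_1^\perp$. Reading \eqref{eqn:GC_C1_y} through this bijection, conditions (a) and (b) together assert that every nontrivial character sum
\[ \sum_{\bm{u}\in\mathcal{C}_1}(-1)^{\bm{\chi}\bm{u}^T}\, d_{\bm{u}\oplus\bm{y}}, \qquad \bm{\chi}\in(\F_2^n/\mathcal{C}_1^\perp)\setminus\{\bm{0}\}, \]
vanishes. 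By orthogonality of characters on the finite abelian group $\mathcal{C}_1$, this forces the function $\bm{u}\mapsto d_{\bm{u}\oplus\bm{y}}$ on $\mathcal{C}_1$ to be constant, and unitarity of $U_Z$ (so $|d_{\bm{u}}|=1$ for all $\bm{u}$) makes the common value a complex number of modulus one.

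For the converse, substituting $d_{\bm{u}\oplus\bm{y}}=c$ into \eqref{eqn:GC_C1_y} collapses the sum: only the trivial character $\bm{\chi}=\bm{0}$ survives, giving $A_{\bm{0},\bm{0}}=c$ and $A_{\bm{\mu},\bm{\gamma}}=0$ otherwise, whereupon \eqref{eqn:dig_log_op} delivers $U_Z^L=c\,E(\bm{0},\bm{0})$, i.e.\ the logical identity up to a global phase. The only nonroutine step in the argument is the bookkeeping bijection between $(\bm{\mu},\bm{\gamma})$-pairs and cosets of $\mathcal{C}_1^\perp$ in $\F_2^n$; once that is in place the proof is a clean application of Fourier inversion on $\mathcal{C}_1$, using no structural property of the CSS code beyond the inclusion $\mathcal{C}_1^\perp\subset\mathcal{C}_2^\perp$.
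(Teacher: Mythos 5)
Your proof is correct but routes the key deduction differently from the paper. The paper's proof collapses the equivalence into a single observation: once the codespace is preserved, Theorem~\ref{thm:general_invar} normalizes $\sum_{\bm{\gamma}}|A_{\bm{0},\bm{\gamma}}|^2=1$, so the induced logical operator is the identity if and only if $|A_{\bm{0},\bm{0}}|=1$; since $A_{\bm{0},\bm{0}}$ is the average of the $|\mathcal{C}_1|$ unimodular numbers $d_{\bm{u}\oplus\bm{y}}$, equality in the triangle inequality forces all of them to coincide. You instead read off the full family of generator coefficients from \eqref{eqn:GC_C1_y} and observe, via the bijection $(\bm{\mu},\bm{\gamma})\mapsto\bm{\mu}\oplus\bm{\gamma}$ between $(\F_2^n/\mathcal{C}_2^\perp)\times(\mathcal{C}_2^\perp/\mathcal{C}_1^\perp)$ and $\F_2^n/\mathcal{C}_1^\perp$, that they are exactly the Fourier coefficients of $\bm{u}\mapsto d_{\bm{u}\oplus\bm{y}}$ on $\mathcal{C}_1$ (whose character group is $\F_2^n/\mathcal{C}_1^\perp$); vanishing off the trivial character is then equivalent, by Fourier inversion, to constancy. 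The paper's argument is shorter but leans on the $d_{\bm{u}}$ having modulus one for the triangle-inequality step; your argument obtains constancy from orthogonality of characters alone, invoking unitarity only at the end to pin $|c|=1$, and it makes explicit the structural identification of the generator coefficients with a complete set of Fourier coefficients on $\mathcal{C}_1$ --- a viewpoint the paper uses implicitly in the proof of Theorem~\ref{thm:Main_conn_phy_log} but does not articulate at this point.
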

    \begin{proof}
   It follows from \eqref{eqn:dig_log_op} that $U_Z^L = I_{2^k}$ if and only if 
    \begin{equation}
    \left|A_{\bm{\mu}=\bm{0},\bm{\gamma}=\bm{0}}\right| = \left|\frac{1}{|\mathcal{C}_1|}\sum_{\bm{u}\in \mathcal{C}_1 }  d_{\bm{u}\oplus\bm{y}}\right| = 1,
    \end{equation}
     which is equivalent to requiring that $2^{k_1}$ diagonal entries of the physical gate $U_Z$ indexed by the set $\mathcal{C}_1+\bm{y}$ are identical. 
    \end{proof}
    The mapping from a physical gate that preserves a given CSS code to the induced logical operator is a group homomorphism. The kernel of this homomorphism is the group of phsyical gates that induce the logical identity. 
    \begin{remark}
      Given a CSS code, Lemma \ref{lemma:log_identity} characterizes all the diagonal physical gates that induce the identity on the codespace. This enables code design within a decoherence-free subspace (DFS) for a particular noise system. For homogeneous coherent noise (same angle on each physical qubit), we consider
    \begin{align}
    U_Z 
    =\begin{bmatrix}
    1 & 0\\
    0 & e^{\imath\theta} 
    \end{bmatrix}^{\otimes n} \equiv  \sum_{\bm{u}\in\F_2^n} \left(e^{\imath\theta} \right)^{w_H(\bm{u})}\ket{\bm{u}}\bra{\bm{u}}, 
    \end{align}
    with $\theta\in(0, 2\pi)$. We design CSS codes that are oblivious to all such gates by making sure all the Hamming weights in the coset $\mathcal{C}_1 +\bm{y}$ are the same (a new perspective on the results in \cite{hu2022mitigating,ouyang2021avoiding}). For coherent noise with inhomogeneous angles, this perspective enables code design to mitigate these correlated errors. 
    For example, we consider 
    $U_Z = 
    \begin{bmatrix}
    1 & 0\\
    0 & e^{\imath\theta_1} 
    \end{bmatrix}{\otimes}
    \begin{bmatrix}
    1 & 0\\
    0 & e^{\imath\theta_2} 
    \end{bmatrix}{\otimes}
    \begin{bmatrix}
    1 & 0\\
    0 & e^{\imath\theta} 
    \end{bmatrix}{\otimes}
    \begin{bmatrix}
    1 & 0\\
    0 & e^{\imath\theta} 
    \end{bmatrix}{\otimes}
    \begin{bmatrix}
    1 & 0\\
    0 & e^{\imath\theta^{'}_1} 
    \end{bmatrix}{\otimes}
    \begin{bmatrix}
    1 & 0\\
    0 & e^{\imath\theta_2'} 
    \end{bmatrix}{\otimes}
    $, with $\theta\in(0, 2\pi)$ and $\theta_1 +\theta_2 = \theta_1^{'} +\theta_2^{'} = \theta $.
    By selecting select the diagonal elements of $U_Z$ with the same value, we design a $\llbr 6,1,2\rrbr$ CSS code within a DFS for the inhomogeneous noise system, where  
    \begin{align}
    G_{\mathcal{C}_2} =     
    \begin{bmatrix}
    1 & 1 & 1 & 1 & 1 & 1
    \end{bmatrix}
    \subset
    G_{\mathcal{C}_1} =     
    \begin{bmatrix}
    1 & 1 & 1 & 1 & 1 & 1\\
    0 & 0 & 1 & 1 & 0 & 0
    \end{bmatrix}
    \text{ and }
    \bm{y}=[1,1,1,0,0,0].
    \end{align} 
    \end{remark}
    Given a CSS code, we now characterize and represent all possible diagonal gates that realize a target diagonal logical gate. 
    Consider a diagonal physical gate $U_Z$ that preserves a CSS$(X,\mathcal{C}_2;Z,\mathcal{C}_1^\perp,\bm{y})$ code, inducing a diagonal logical gate $U_Z^l$. The generator coefficients $A_{\bm{0},\bm{\gamma}}$ appear as coefficients in the Pauli expansions of $U_Z$ and $U_Z^L$, creating a bridge between physical and logical worlds. We can express the coefficients $A_{\bm{0},\bm{\gamma}}$ in terms of the diagonal entries $d_{\bm{u}}$ of $U_Z$, and we can express them in terms of the diagonal entries $e^{\imath \theta_{\bm{\alpha}}}$ of the logical gate $U_Z^L$. Theorem \ref{thm:Main_conn_phy_log} results from equating these two expressions. 
    \begin{theorem}\label{thm:Main_conn_phy_log}
         Given a CSS($X,\mathcal{C}_2;Z,\mathcal{C}_1^\perp, \bm{y}$) code, the diagonal physical gate $U_Z= \sum_{\bm{u}\in \F_2^n} d_{\bm{u}}\ket{\bm{u}}\bra{\bm{u}}$ induces the logical gate $U_Z^L = \sum_{\bm{\alpha}\in \F_2^k} e^{\imath \theta_{\bm{\alpha}}}\ket{\bm{\alpha}}\bra{\bm{\alpha}}$ if and only if  
         \begin{equation}\label{eqn:con_phy_log}
             d_{\bm{u}\oplus\bm{y}} = e^{\imath\theta_{\bm{\alpha}}}~~~~ \text{  for  } G_{\mathcal{C}_2^\perp/\mathcal{C}_1^\perp} \bm{u}^T = \bm{\alpha}^T.
         \end{equation}
    \end{theorem}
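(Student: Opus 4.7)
Plan. My approach is to view equation \eqref{eqn:GC_C1_y} as a linear transform sending $\bigl(d_{\bm{u}\oplus\bm{y}}\bigr)_{\bm{u}\in\mathcal{C}_1}$ to $\bigl(A_{\bm{\mu},\bm{\gamma}}\bigr)_{\bm{\mu}\in\F_2^n/\mathcal{C}_2^\perp,\,\bm{\gamma}\in\mathcal{C}_2^\perp/\mathcal{C}_1^\perp}$, and to translate the joint requirement that $U_Z$ preserve the code and induce $U_Z^L$ into prescribed values for the $A_{\bm{\mu},\bm{\gamma}}$. By Theorem~\ref{thm:general_invar} and the remark following it, preservation of the codespace is equivalent to $A_{\bm{\mu}\neq\bm{0},\bm{\gamma}}=0$, and comparing the Pauli expansion \eqref{eqn:dig_log_op} with that of $\sum_{\bm{\alpha}}e^{\imath\theta_{\bm{\alpha}}}\ket{\bm{\alpha}}\bra{\bm{\alpha}}$ fixes $A_{\bm{0},g(\bm{\beta})}=2^{-k}\sum_{\bm{\alpha}\in\F_2^k}e^{\imath\theta_{\bm{\alpha}}}(-1)^{\bm{\alpha}\bm{\beta}^T}$ for every $\bm{\beta}\in\F_2^k$. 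This is a square linear system with $2^{k_1}$ equations in $2^{k_1}$ unknowns, whose coefficient matrix is invertible: once systems of coset representatives are fixed, the sums $\bm{\mu}+\bm{\gamma}$ enumerate all $2^{k_1}$ distinct cosets of $\mathcal{C}_1^\perp$ in $\F_2^n$, so $\bigl[(-1)^{(\bm{\mu}+\bm{\gamma})\bm{u}^T}\bigr]$ is (up to scaling) the character table of $\mathcal{C}_1\cong\F_2^{k_1}$.

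Given this uniqueness, both directions reduce to verifying that the pointwise rule $d_{\bm{u}\oplus\bm{y}}=e^{\imath\theta_{\bm{\alpha}}}$ whenever $G_{\mathcal{C}_2^\perp/\mathcal{C}_1^\perp}\bm{u}^T=\bm{\alpha}^T$ realizes exactly these prescribed $A_{\bm{\mu},\bm{\gamma}}$. I decompose each $\bm{u}\in\mathcal{C}_1$ as $\bm{u}=\bm{\alpha}G_{\mathcal{C}_1/\mathcal{C}_2}+\bm{c}$ with $\bm{c}\in\mathcal{C}_2$; then \eqref{eqn:log_req} together with the fact that the rows of $G_{\mathcal{C}_2^\perp/\mathcal{C}_1^\perp}$ lie in $\mathcal{C}_2^\perp$ and therefore annihilate $\bm{c}$ shows that $G_{\mathcal{C}_2^\perp/\mathcal{C}_1^\perp}\bm{u}^T=\bm{\alpha}^T$, so the rule is well-defined on $\mathcal{C}_1$. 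Substituting into \eqref{eqn:GC_C1_y} and factoring the sum along this coset decomposition yields
\begin{equation*}
A_{\bm{\mu},\bm{\gamma}}=\frac{1}{|\mathcal{C}_1|}\sum_{\bm{\alpha}\in\F_2^k}e^{\imath\theta_{\bm{\alpha}}}(-1)^{(\bm{\mu}+\bm{\gamma})G_{\mathcal{C}_1/\mathcal{C}_2}^T\bm{\alpha}^T}\sum_{\bm{c}\in\mathcal{C}_2}(-1)^{(\bm{\mu}+\bm{\gamma})\bm{c}^T}.
\end{equation*}
Character orthogonality on $\mathcal{C}_2$ gives $\sum_{\bm{c}\in\mathcal{C}_2}(-1)^{(\bm{\mu}+\bm{\gamma})\bm{c}^T}=|\mathcal{C}_2|$ when $\bm{\mu}+\bm{\gamma}\in\mathcal{C}_2^\perp$ and zero otherwise; since $\bm{\gamma}\in\mathcal{C}_2^\perp$ is automatic, this kills every term with $\bm{\mu}\neq\bm{0}$. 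For $\bm{\mu}=\bm{0}$ and $\bm{\gamma}=g(\bm{\beta})$, a second application of \eqref{eqn:log_req} reduces the exponent to $\bm{\beta}\bm{\alpha}^T$ and recovers exactly the prescribed $A_{\bm{0},g(\bm{\beta})}$.

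By invertibility, no other assignment of $\{d_{\bm{u}\oplus\bm{y}}\}_{\bm{u}\in\mathcal{C}_1}$ can realize all the target generator coefficients, so the pointwise rule is forced, giving both directions of the theorem. The main obstacle, I expect, is the careful bookkeeping around cosets and the two dual-orthogonality identities $G_{\mathcal{C}_1/\mathcal{C}_2}G_{\mathcal{C}_2^\perp/\mathcal{C}_1^\perp}^T=I_k$ and $G_{\mathcal{C}_2^\perp/\mathcal{C}_1^\perp}\bm{c}^T=\bm{0}$ for $\bm{c}\in\mathcal{C}_2$; once these are in hand the double sum decouples cleanly and character orthogonality finishes the job.
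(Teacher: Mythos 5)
Your proof is correct, and it takes a somewhat different (arguably cleaner) route than the paper's. The paper also proceeds by equating two expressions for generator coefficients, but it restricts attention to $\bm{\mu}=\bm{0}$ from the start: it writes $A_{\bm{0},\bm{\gamma}}$ once in terms of $e^{\imath\theta_{\bm{\alpha}}}$ and once in terms of $d_{\bm{u}\oplus\bm{y}}$, then right-multiplies both by $\bigl(H^{\bm{\mu}=\bm{0}}_{(\mathcal{C}_1,\mathcal{C}_2^\perp/\mathcal{C}_1^\perp)}\bigr)^T$, using the identity $H^{\bm{\mu}=\bm{0}}\bigl(H^{\bm{\mu}=\bm{0}}\bigr)^T=I_{2^{k_1-k_2}}\otimes B$ to recover coset averages of $d_{\bm{u}\oplus\bm{y}}$, which it then identifies with individual entries (implicitly invoking constancy on cosets, i.e., preservation). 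Your approach instead keeps all syndromes $\bm{\mu}$: you prescribe the full target pattern of generator coefficients ($A_{\bm{\mu}\neq\bm{0},\bm{\gamma}}=0$ from Theorem~\ref{thm:general_invar}, and $A_{\bm{0},g(\bm{\beta})}=2^{-k}\sum_{\bm{\alpha}}e^{\imath\theta_{\bm{\alpha}}}(-1)^{\bm{\alpha}\bm{\beta}^T}$ from the Pauli expansion of the target), observe that the map $(d_{\bm{u}\oplus\bm{y}})_{\bm{u}\in\mathcal{C}_1}\mapsto(A_{\bm{\mu},\bm{\gamma}})$ is governed by a $2^{k_1}\times 2^{k_1}$ matrix whose rows are the distinct characters of $\mathcal{C}_1$ (since the sums $\bm{\mu}+\bm{\gamma}$ run over all cosets of $\mathcal{C}_1^\perp$), hence invertible, and then verify by the coset decomposition $\bm{u}=\bm{\alpha}G_{\mathcal{C}_1/\mathcal{C}_2}\oplus\bm{c}$ plus character orthogonality on $\mathcal{C}_2$ that the pointwise rule achieves exactly this pattern. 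The invertibility then delivers the ``only if'' direction directly. What you gain is a clean separation of the forward computation from the uniqueness argument, and an explicit treatment of the $\bm{\mu}\neq\bm{0}$ vanishing, which the paper's multiplication by the non-square matrix $\bigl(H^{\bm{\mu}=\bm{0}}\bigr)^T$ leaves slightly implicit. Both proofs rest on the same character-orthogonality structure; yours exposes it more fully.
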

    \begin{remark}
      If we think of $G_{\mathcal{C}_2^\perp/\mathcal{C}_1^\perp}\bm{v}^T$ as a syndrome, then we can observe that $\bm{u}$ and $\bm{u}+\bm{w}$, $\bm{w}\in\mathcal{C}_2$ determine the same syndrome. 
    \end{remark}
    \begin{proof}
    We express the generator coefficients $A_{\bm{0},\bm{\gamma}}$ in terms of the diagonal entries $e^{\imath\theta_{\bm{\alpha}}}$ of the logical gate $U_Z^L$,
    \begin{align} \label{eqn:log_Pauli_coeff1}
        \left[A_{\bm{0},\bm{\beta}G_{\mathcal{C}_2^\perp/\mathcal{C}_1^\perp}}\right]_{\bm{\beta}\in\F_2^k} =  
        \left[e^{\imath \theta_{\bm{\alpha}}}\right]_{\bm{\alpha}\in\F_2^k} \frac{1}{2^k}\left[(-1)^{\bm{\alpha\beta^T}}\right]_{\bm{\alpha},\bm{\beta}\in\F_2^k}.
    \end{align}
    We then express the coefficients $A_{\bm{0},\bm{\gamma}}$ in terms of the diagonal entries $d_{\bm{u}}$ of $U_Z$,
    \begin{align}
        \left[A_{\bm{\mu},\bm{\beta}G_{\mathcal{C}_2^\perp/\mathcal{C}_1^\perp}}\right]_{\bm{\beta}\in\F_2^k}
	   & = 
	   \frac{1}{|C_1|} \left[d_{\bm{u}\oplus\bm{y}}\right]_{\bm{u}\in\mathcal{C}_1} H^{\bm{\mu}=\bm{0}}_{(\mathcal{C}_1,\mathcal{C}_2^\perp/\mathcal{C}_1^\perp)}\nonumber \\
	   & = 
	   \frac{1}{|C_1|} \left[d_{\bm{u}\oplus\bm{y}}\right]_{\bm{u}\in\mathcal{C}_1}
	   \left[(-1)^{\bm{\beta}G_{\mathcal{C}_2^\perp/\mathcal{C}_1^\perp}\bm{u}^T}\right]_{{\bm{u}\in\mathcal{C}_1},\bm{\beta}\in \F_2^k}.
	   \label{eqn:log_Pauli_coeff2}
    \end{align}
    We permute entries in $[d_{\bm{u}\oplus\bm{y}}]_{\bm{u}\in\mathcal{C}_1}$ and rows in $H^{\bm{\mu}=\bm{0}}_{(\mathcal{C}_1,\mathcal{C}_2^\perp/\mathcal{C}_1^\perp)}$ to group together elements from the same coset of $\mathcal{C}_2$ in $\mathcal{C}_1$. Given $\bm{u_1}, \bm{u_2} \in \mathcal{C}_2$ and $\bm{w_1}, \bm{w_2} \in \mathcal{C}_1$, we have 
    \begin{align}
        \sum_{\bm{\beta}\in\F_2^k} (-1)^{\bm{\beta}G_{\mathcal{C}_2^\perp/\mathcal{C}_1^\perp} \left(\bm{u_1}\oplus\bm{w_1}\oplus\bm{u_2}\oplus\bm{w_2}\right)^T}
        =
        \left\{ 
	\begin{array}{lc}	
    2^k,  &\text{ if } \bm{w_1} \oplus \bm{w_2} \in \mathcal{C}_2, \\
	0, &\text{ otherwise. }
	\end{array}\right. 
    \end{align}
    Hence
    \begin{align}
        H^{\bm{\mu}=\bm{0}}_{(\mathcal{C}_1,\mathcal{C}_2^\perp/\mathcal{C}_1^\perp)} 
        \left(H^{\bm{\mu}=\bm{0}}_{(\mathcal{C}_1,\mathcal{C}_2^\perp/\mathcal{C}_1^\perp)}\right)^T
        = I_{2^{k_1-k_2}} \otimes B,
    \end{align}
    where $B$ is a square matrix of size $2^{k_2}$ with every entry equal to $2^k$. 
    We multiply \eqref{eqn:log_Pauli_coeff1} on the right by $\left(H^{\bm{\mu}=\bm{0}}_{(\mathcal{C}_1,\mathcal{C}_2^\perp/\mathcal{C}_1^\perp)}\right)^T$ to obtain
    \begin{align} \label{eqn:log_Pauli_coeff3}
        \left[ \frac{1}{2^{k_1}} \sum_{\bm{\alpha}\in\F_2^k} e^{\imath \theta_{\bm{\alpha}}} \sum_{\bm{\beta}\in\F_2^k} (-1)^{\bm{\beta} \left( \bm{\alpha}^T+G_{\mathcal{C}_2^\perp/\mathcal{C}_1^\perp}\bm{u}^T \right)}\right] = \left[e^{\imath \theta_{\bm{\alpha}(\bm{u})}}\right]_{\bm{\alpha}(\bm{u})^T = G_{\mathcal{C}_2^\perp/\mathcal{C}_1^\perp}\bm{u}^T}.
    \end{align}
    We then multiply \eqref{eqn:log_Pauli_coeff2} on the right by $\left(H^{\bm{\mu}=\bm{0}}_{(\mathcal{C}_1,\mathcal{C}_2^\perp/\mathcal{C}_1^\perp)}\right)^T$ to obtain
    \begin{align}\label{eqn:log_Pauli_coeff4}
        \frac{1}{2^{k_1}}\left[ d_{\bm{u}\oplus\bm{y}}\right]_{\bm{u}\in\mathcal{C}_1} \left(I_{2^{k_1-k_2}}\otimes B\right)
        =
        \left[ \frac{1}{2^{k_2}} \sum_{\bm{u}\in\mathcal{C}_2+\bm{w}}d_{\bm{u}\oplus\bm{y}} \right]_{\bm{w}\in \mathcal{C}_1/\mathcal{C}_2}
        = 
        \left[d_{\bm{u}\oplus\bm{y}} \right]_{\bm{\alpha}(\bm{u})^T = G_{\mathcal{C}_2^\perp/\mathcal{C}_1^\perp}\bm{u}^T}.
    \end{align}
    We conclude the proof by equating \eqref{eqn:log_Pauli_coeff3} and \eqref{eqn:log_Pauli_coeff4}.
    \end{proof}

    \begin{corollary}\label{coro:whether_invar}
    Set $
    \mathcal{C}_2 = \{\bm{u_0},\bm{u_1},\dots,\bm{u_{2^{k_2}-1}}\}$. A diagonal physical gate $U_Z= \sum_{\bm{u}\in \F_2^n} d_{\bm{u}}\ket{\bm{u}}\bra{\bm{u}}$ preserves the CSS($X,\mathcal{C}_2;$ $Z,\mathcal{C}_1^\perp$,$\bm{y}$) codespace if and only if for each fixed $\bm{w}\in \mathcal{C}_1/\mathcal{C}_2$, 
      $  d_{\bm{u_0}\oplus\bm{w}\oplus\bm{y}} =  d_{\bm{u_1}\oplus\bm{w}\oplus\bm{y}} = \dots = d_{\bm{u_{2^{k_2}-1}}\oplus\bm{w}\oplus\bm{y}}.$
    The induced logical operator is $U_Z^L = \sum_{\bm{\alpha}\in\F_2^k} d_{\bm{u_0}\oplus\bm{\alpha}G_{\mathcal{C}_1/\mathcal{C}_2}\oplus\bm{y}} \ket{\bm{\alpha}}\bra{\bm{\alpha}}$.
    \end{corollary}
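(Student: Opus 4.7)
The plan is to derive Corollary \ref{coro:whether_invar} as a direct reinterpretation of Theorem \ref{thm:Main_conn_phy_log}, replacing the syndrome condition $G_{\mathcal{C}_2^\perp/\mathcal{C}_1^\perp}\bm{u}^T=\bm{\alpha}^T$ by a coset condition on $\mathcal{C}_2$ in $\mathcal{C}_1$.

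The first step is to establish that the $\F_2$-linear map
$\phi:\mathcal{C}_1 \to \F_2^k$, $\bm{u}\mapsto (G_{\mathcal{C}_2^\perp/\mathcal{C}_1^\perp}\bm{u}^T)^T$
has kernel exactly $\mathcal{C}_2$ and is surjective. For the kernel, a vector $\bm{u}\in\mathcal{C}_1$ lies in $\ker \phi$ iff it is orthogonal to every row of $G_{\mathcal{C}_2^\perp/\mathcal{C}_1^\perp}$; combined with the fact that $\bm{u}\in\mathcal{C}_1$ is already orthogonal to $\mathcal{C}_1^\perp$, and since the rows of $G_{\mathcal{C}_2^\perp/\mathcal{C}_1^\perp}$ together with $\mathcal{C}_1^\perp$ span $\mathcal{C}_2^\perp$, this forces $\bm{u}\in \mathcal{C}_1\cap(\mathcal{C}_2^\perp)^\perp = \mathcal{C}_1\cap\mathcal{C}_2 = \mathcal{C}_2$. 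Surjectivity follows directly from \eqref{eqn:log_req}: taking $\bm{u}$ to be the $i$-th row $\bm{w_i}$ of $G_{\mathcal{C}_1/\mathcal{C}_2}$ gives $\phi(\bm{w_i})=\bm{e_i}$. Hence $\phi$ factors through a bijection $\mathcal{C}_1/\mathcal{C}_2 \to \F_2^k$, and the fiber over $\bm{\alpha}\in\F_2^k$ is precisely the coset $\mathcal{C}_2 + \bm{\alpha}G_{\mathcal{C}_1/\mathcal{C}_2} = \{\bm{u_0}\oplus\bm{w},\bm{u_1}\oplus\bm{w},\dots,\bm{u_{2^{k_2}-1}}\oplus\bm{w}\}$ with $\bm{w}=\bm{\alpha}G_{\mathcal{C}_1/\mathcal{C}_2}$.

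Next, I would apply Theorem \ref{thm:Main_conn_phy_log}. For the $(\Rightarrow)$ direction, observe that \eqref{eqn:dig_log_op} expresses the induced logical operator of any diagonal $U_Z$ preserving a CSS code as an $\F_2^k$-indexed sum of $Z$-logical Paulis $E(\bm{0},\bm{\alpha})$, each of which is diagonal; hence $U_Z^L$ is necessarily diagonal, say $\sum_{\bm{\alpha}} e^{\imath\theta_{\bm{\alpha}}}\ket{\bm{\alpha}}\bra{\bm{\alpha}}$. Theorem \ref{thm:Main_conn_phy_log} then forces $d_{\bm{u}\oplus\bm{y}}=e^{\imath\theta_{\bm{\alpha}}}$ uniformly over the fiber $\phi^{-1}(\bm{\alpha})$, which by the previous step is exactly the coset $\mathcal{C}_2+\bm{w}$; this is the claimed equality $d_{\bm{u_0}\oplus\bm{w}\oplus\bm{y}}=\dots=d_{\bm{u_{2^{k_2}-1}}\oplus\bm{w}\oplus\bm{y}}$ for every $\bm{w}\in \mathcal{C}_1/\mathcal{C}_2$. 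For the $(\Leftarrow)$ direction, if the $d_{\bm{u}\oplus\bm{y}}$ are constant on each coset of $\mathcal{C}_2$ in $\mathcal{C}_1$, I would define $e^{\imath\theta_{\bm{\alpha}}}$ to be this common value on the fiber over $\bm{\alpha}$; Theorem \ref{thm:Main_conn_phy_log} then guarantees that $U_Z$ realizes the diagonal logical gate with these phases, which in particular preserves the codespace.

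The explicit formula for $U_Z^L$ is obtained by picking the canonical coset representative $\bm{w}=\bm{\alpha}G_{\mathcal{C}_1/\mathcal{C}_2}$ and $\bm{u_0}\in\mathcal{C}_2$, giving $e^{\imath\theta_{\bm{\alpha}}}=d_{\bm{u_0}\oplus\bm{\alpha}G_{\mathcal{C}_1/\mathcal{C}_2}\oplus\bm{y}}$, which plugged into $U_Z^L=\sum_{\bm{\alpha}}e^{\imath\theta_{\bm{\alpha}}}\ket{\bm{\alpha}}\bra{\bm{\alpha}}$ yields the displayed expression. There is no deep obstacle here: the corollary is essentially a clean translation of Theorem \ref{thm:Main_conn_phy_log} into coset language, so the only point requiring care is the kernel/image analysis of $\phi$, where one must invoke both the inclusion $\mathcal{C}_2\subset\mathcal{C}_1$ and the duality relation \eqref{eqn:log_req} to correctly identify fibers with cosets of $\mathcal{C}_2$ in $\mathcal{C}_1$.
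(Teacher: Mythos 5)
Your proposal is correct and follows essentially the same route as the paper: the paper's own proof is a one-liner invoking Theorems \ref{thm:general_invar} and \ref{thm:Main_conn_phy_log} together with the duality relation $G_{\mathcal{C}_1/\mathcal{C}_2}G^T_{\mathcal{C}_2^\perp/\mathcal{C}_1^\perp}=I_k$, which is exactly the machinery you use. What you have added is the explicit kernel/image analysis of the map $\bm{u}\mapsto G_{\mathcal{C}_2^\perp/\mathcal{C}_1^\perp}\bm{u}^T$ on $\mathcal{C}_1$ identifying its fibers with cosets of $\mathcal{C}_2$, and the observation from \eqref{eqn:dig_log_op} that the induced logical operator is automatically diagonal; both are precisely the details the paper leaves implicit.
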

    \begin{proof}
      Note that $G_{\mathcal{C}_1/\mathcal{C}_2} G^T_{\mathcal{C}_2^\perp/\mathcal{C}_1^\perp} = I_{k}$. Follows Theorem \ref{thm:general_invar} and Theorem \ref{thm:Main_conn_phy_log}. See Fig. \ref{fig:main_coro} for visualization.
    \end{proof}
    \begin{figure}[h!]
        \centering
        \begin{tikzpicture}
     \node (d1) at (-0.5,2.5) {$2^n$};
     \path[draw,thick] (0.2,0) -- (0,0) -- (0,5) -- (0.2,5); 
     \path[draw,thick] (4.8,0) -- (5,0) -- (5,5) -- (4.8,5); 
     \path[draw,dotted,thick] (3,-0.35) -- (3,5.35);
     \path[draw,dotted,thick] (-0.35,2) -- (5.35,2);
     \node (C1) at (1.5,5.5) {$\mathcal{C}_1$};
     \node (outside_C1) at (4,5.5) {$\F_2^n \setminus \mathcal{C}_1$};
     \path[draw,dotted] (0,4.1) -- (0.9,4.1) -- (0.9,5);
     \path[draw,dotted] (0.9,3.2) -- (0.9,4.1) -- (1.8,4.1) -- (1.8,3.2) -- (0.9,3.2);
     \path[draw,dotted] (2.1,2) -- (2.1,2.9) -- (3,2.9);
     \node (C2) at (0.45,3.9) {\scriptsize $\mathcal{C}_2$};
     \node (C2_w1) at (1.35,3) {\scriptsize $\mathcal{C}_2+\bm{w}_1$};
     \node (C2_w2) at (2.35,1.8) {\tiny $\mathcal{C}_2+\bm{w}_{2^k-1}$};
     \path[draw,thick] (9.2,2) -- (9,2) -- (9,4) -- (9.2,4); 
     \path[draw,thick] (10.8,2) -- (11,2) -- (11,4) -- (10.8,4); 
     \node (d2) at (8.65,3) {$2^k$};
     \node (d3) at (10,1.75) {$2^k$};
     \node (CSS) at (7,2.75) {an $[[n,k,d]]$};
     \node (CSS) at (7,2.25) {CSS code};
     \node (from) at (5.5,2.75) {};
     \node (to) at (8.45,2.75) {};
     \draw[<->,blue] (from) to [out=80,in=130] (to);
     \path[draw,ultra thick,red!60] (0.05,4.95) -- (3,2);
     \path[draw,ultra thick,red!10] (3,2) -- (4.95,0.05);
     \fill[blue!60](0.1,4.9)circle(1mm);
     \fill[blue!60](1,4)circle(1mm);
     \fill[blue!60](1.95,3.05)circle(1mm);
     \fill[blue!60](2.2,2.8)circle(1mm);
     
      \fill[blue!60](9.2,3.8)circle(1mm);
     \fill[blue!60](9.4,3.6)circle(1mm);
     \fill[blue!60](9.6,3.4)circle(1mm);
     \fill[blue!60](10.8,2.2)circle(1mm);
     \draw[loosely dotted] (9.8,3.2)-- (10.6,2.4);
    \end{tikzpicture}
        \caption{Bridge between Physical gate (left) and Induced Logical gate (right): If the little diagonal blocks of physical unitary are $aI_{2^{k_1-k_2}}$ for some constant $a\in \C$, then the physical gate preserves the CSS codespace and inducing the logical gate on the right by shrinking each little diagonal block into one diagonal element. }
        \label{fig:main_coro}
    \end{figure}
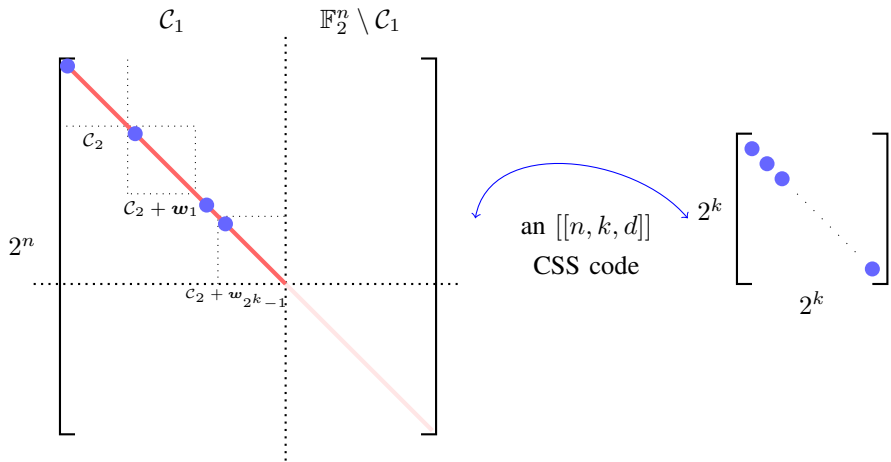
    \begin{remark}
      Corollary \ref{coro:whether_invar} provides a direct way to check whether a physical gate preserves a CSS code, and enables design of CSS codes that are preserved by a particular physical diagonal gate. It also implies that a CSS code with more $Z$-stabilizers (smaller $|\mathcal{C}_1|$) can be preserved by more physical diagonal gates, which is consistent with \cite[Theorem 2]{rengaswamy2020optimality}. When $U_Z=\sum_{\bm{u}\in\F_2^n} \left(e^{\imath\theta} \right)^{w_H(\bm{u})}\ket{\bm{u}}\bra{\bm{u}}$ and $\bm{y}=\bm{0}$, Corollary \ref{coro:whether_invar} can be interpreted as \cite[Corollary 3]{zeng2011transversality}. Here, we consider more general transversal physical gates (See Example \ref{exam:5_1_2}) and specify the induced logical gate explicitly. 
    \end{remark}
    
    In Example \ref{exam:15_1_3}, we see from Table \ref{tab:15_1_3_C_1} that all the weight-0 and weight-8 vectors are in $\mathcal{C}_2$ while all the weight-7 and weight-15 vectors are in the coset $\mathcal{C}_2+\bm{1}$. Thus diagonal entries in the same coset of $\mathcal{C}_2$ in $\mathcal{C}_1$ are identical. We now discuss the $\llbr 5,1,2\rrbr$ code \cite{vasmer2021morphing} introduced by Vasmer and Kubica, where the mixed transversal physical gate $P\otimes P^\dagger\otimes P \otimes \mathrm{C}Z$ induces a fault-tolerant logical $P$ gate.
    \begin{example}\label{exam:5_1_2}
    We first revisit the construction of the $\llbr 5,1,2\rrbr$ code \cite{vasmer2021morphing} starting from the stabilizer generator matrix (all positive signs $\bm{r}=\bm{y}=\bm{0}$).
    \begin{align}
    \setlength\aboverulesep{0pt}\setlength\belowrulesep{0pt}
    \setlength\cmidrulewidth{0.5pt}
    G_S = 
    \left[
    \begin{array}{ccccc|ccccc}
    1 & 1 & 0 & 1 & 0 & 0 & 0 & 0 & 0 & 0  \\
    0 & 1 & 1 & 0 & 1 & 0 & 0 & 0 & 0 & 0  \\
    \hline 
    0 & 0 & 0 & 0 & 0 & 1 & 1 & 0 & 0 & 1  \\
    0 & 0 & 0 & 0 & 0 & 0 & 1 & 1 & 1 & 0  \\
    \end{array}
    \right].
    \end{align}
    The only non-trivial $X$-logical is $\bm{w}=[1,1,1,0,0]\in \mathcal{C}_1/\mathcal{C}_2$. We have $\mathcal{C}_2 = \{\bm{0},[1,1,0,1,0],[0,1,1,0,1],[1,0,1,1,1]\}$ and $\mathcal{C}_2 +\bm{w}= \{\bm{w},[0,0,1,1,0],[1,0,0,0,1],[0,1,0,1,1]\}$. Consider the physical diagonal gate $U_Z = P\otimes P^\dagger\otimes P \otimes \mathrm{C}Z = \sum_{\bm{u}\in \F_2^5} d_{\bm{u}}\ket{\bm{u}}\bra{\bm{u}}$, we have 
    \begin{align}
        1&=d_{\bm{0}} 
          =e^{\imath\frac{\pi}{2}}e^{-\imath\frac{\pi}{2}} = d_{[1,1,0,1,0]} 
          =e^{-\imath\frac{\pi}{2}}e^{\imath\frac{\pi}{2}} = d_{[0,1,1,0,1]} 
          = e^{\imath\frac{\pi}{2}}e^{\imath\frac{\pi}{2}}e^{\imath \pi} = d_{[1,0,1,1,1]},\\
        e^{\imath\frac{\pi}{2}}&=e^{\imath\frac{\pi}{2}}e^{-\imath\frac{\pi}{2}}e^{\imath\frac{\pi}{2}} = d_{\bm{w}}
         = d_{[0,0,1,1,0]} = d_{[1,0,0,0,1]} = e^{-\imath\frac{\pi}{2}}e^{\imath \pi} = d_{[0,1,0,1,1]}.
    \end{align}
    It follows from Corollary \ref{coro:whether_invar} that $U_Z$ preserves the codespace, inducing the logical Phase gate $U_Z^L = \ket{0}\bra{0} + e^{\imath\frac{\pi}{2}}\ket{1}\bra{1}$. To demonstrate fault-tolerance, we first calculate the set of undetectable $Z$-errors,
    \begin{align}
        U_{e} = \{[1,1,1,0,0],~[0,0,1,1,0],~[1,0,0,0,1],~[0,1,0,1,1]\}.
    \end{align}
    Since the only two weight-$2$ undetectable errors are not confined to the support of $2$-local physical gate C$Z$, the logical Phase gate is fault-tolerant.
     
     
    \end{example}
    Theorem \ref{thm:Main_conn_phy_log} and Corollary \ref{coro:whether_invar} can be extended to general non-CSS stabilizer codes by the results in \cite[Appendix]{hu2021designing}. We consider a general stabilizer code generated by the matrix $	G_{\mathcal{S}}=\left[\begin{array}{c c}
		A & 0 \\ \hline
		0 & B\\ \hline 
		C & D\\
	\end{array} \right],$ where the submatrices $A$ and $B$ are maximized. Then, the results keep the same except to switch the tower of classical codes from $\mathcal{C}_2 \subset \mathcal{C}_1$ to $\langle A,C\rangle \subset B^\perp$. We illustrate the generalized Corollary \ref{coro:whether_invar} using the $\llbr 5,1,3 \rrbr$ stabilizer code to target a a logical $T$ gate. 
	
    \begin{example}\label{exam:5_1_3}
    Consider the $\llbr 5,1,3\rrbr$ stabilizer code with generator matrix $G_{\mathcal{S}}=[C|D]$, where
		\begin{align}
		C = 
		\left[
		\begin{array}{ccccc}
		1 & 0 & 0 & 1 & 0 \\
		0 & 1 & 0 & 0 & 1 \\
		1 & 0 & 1 & 0 & 0 \\
		0 & 1 & 0 & 1 & 0 \\
		\end{array}
		\right]
		\text{ and }
		D= 
		\left[
		\begin{array}{ccccc}
		0 & 1 & 1 & 0 & 0  \\
		0 & 0 & 1 & 1 & 0  \\
		0 & 0 & 0 & 1 & 1  \\
		1 & 0 & 0 & 0 & 1  \\
		\end{array}
		\right].
		\end{align}
		Note that $B=\{\bm{0}\}$, so $B^\perp = \F_2^5$. Consider the coset $\langle C \rangle$ in $\F_2^5$, where $\langle C \rangle$ contains all the even-weight vectors while its non-trivial coset includes all the odd-weight vectors. Then, it follows from Corollary \ref{coro:whether_invar} with the replaced tower $\langle C \rangle \subset \F_2^5$ that the only diagonal physical gate that preserves the $\llbr 5,1,3\rrbr$ code space and induces a logical $T$ gate is 
		\begin{align}
        U_Z
        &= \sum_{\bm{\alpha}\in \F_2^{5}} d_{\bm{\alpha}} \ket{\bm{\alpha}}\bra{\bm{\alpha}}, \text{ where } 
        d_{\bm{\alpha}} = 
         \left\{\begin{array}{lc}
         1, & \text{ if $w_H(\bm{\alpha})$ is even,} \\
         e^{\imath\frac{\pi}{4}}, & \text{ if $w_H(\bm{\alpha})$ is odd,} 
        \end{array} \right. \label{eqn:max_ent_UZ}\\
        &\equiv \exp\left(-\imath\frac{\pi}{8}Z\otimes Z\otimes Z\otimes Z \otimes Z\right).
        \end{align}
         Although $U_Z$ is a $5$-local gate, we can design a outer code that supports a fault-tolerant logical $U_Z$ (see Section \ref{sec:new_construction}).
    \end{example}
   The generator coefficient framework can work either forwards form a general diagonal physical gate as Example \ref{exam:5_1_2} or backwards from a target diagonal logical gate as Example \ref{exam:5_1_3}. In the following Section, we use the divisibility conditions of cosets in classical coding theory to construct a new family of CSS codes that is preserved by the transversal physical $T^\dagger$ gate, inducing a target logical gate. 
    
	\section{A New Family of CSS Codes Defined by Quadratic Forms} \label{sec:new_construction}
	The defining property of a classical divisible code \cite{Ward} is that codeword weights share a common divisor larger than one. Divisible codes can enable resilient quantum computation when a diagonal physical $U_Z$ preserves a CSS($X,\mathcal{C}_2; Z, \mathcal{C}_1^\perp,\bm{y}$) code, inducing a target logical operator $U_Z^L$. In Example \ref{exam:5_1_3}, the requirement that the induced logical operator is a $T$-gate forces half the cosets in $\mathcal{C}_1/\mathcal{C}_2$ to have all weights divisible by $8$, and the other half to have all weights congruent to $7$  modulo $8$. 
	
	Classical Reed-Muller codes are prototypical divisible codes. Codewords in the Reed-Muller codes RM$(r,m)$ are evaluation functions $[h(\bm{x})]_{\bm{x}\in \F_2^m}$ of boolean functions $h\in \F_2[x_1,\dots,x_m]$ of degree at most $r$. It follows from the theorem of Ax on polynomial zeros \cite{Ax}, that all weights in RM$(r,m)$ are divisible by $2^{\left\lfloor(m-1)/r\right\rfloor}$ (see also \cite{McEliece,macwilliams1977theory}). 
	
	Codewords in the first order Reed-Muller code RM$(1,m)$ are evaluation functions $[\epsilon\bm{1}\oplus L_{\bm{a}}(\bm{x})]_{\bm{x}\in\F_2^m}$ where $\epsilon\in\{0,1\}$ and $L_{\bm{a}} = a_1x_1\oplus \cdots a_m x_m$ is the linear function determined by a non-zero vector $\bm{a}\in\F_2^m$. RM$(1,m)$ is a $[2^m,m+1]$ code, and if we puncture on the coordinate $\bm{x}=\bm{0}$, we obtain the $[2^m-1,m]$ simplex code $\mathcal{C}(m)$, with all non-zero weights equal to $2^{m-1}$.
	
	Codewords in the second order Reed-Muller code RM$(2,m)$ are evaluation functions $[\epsilon\bm{1}\oplus L_{\bm{a}}(\bm{x})\oplus Q_{R}(\bm{}x)]_{\bm{x}\in\F_2^m}$ where $\epsilon\in\{0,1\}$, $L_{\bm{a}} = \bm{a}\bm{x}^T$ is a linear function and $Q_R(x)$ is a quadratic form. The property that defines a quadratic form is
	\begin{align}
	    Q_{R}(\bm{x}\oplus\bm{y}) = Q_R(\bm{x}) \oplus Q_R(\bm{y}) \oplus \bm{x}R\bm{y}^T,
	\end{align}
	where $R$ is a binary symmetric matrix with zero diagonal (binary symplectic matrix). Note that if we write $R=U+U^T$, where $U$ is strictly upper triangular, then we may set $Q_R(\bm{x})=\bm{x} U \bm{x}^T$. Observe that if $L_{\bm{a}}(\bm{x})$ is a linear function, then $Q_R(\bm{x})+L_{\bm{a}}(\bm{x})$ is a quadratic form corresponding to the same binary symplectic matrix $R$. 
	
	The weight distribution of the coset RM$(1,m)+[Q_R(\bm{x})]_{\bm{x}\in\F_2^m}$ depends only on the rank of the binary symplectic matrix $R$ (see for \cite{macwilliams1977theory} a proof using Dickson normal form). Lemma \ref{lemma:modulo_first_order} provides an alternative derivation based on the observation that $Q_R(x)$ is linear on the null space of $R$.

	\begin{lemma}\label{lemma:modulo_first_order}
	$wt_H\left(\left[ L_{\bm{a}(\bm{x}) +Q_R(\bm{x})} \right]_{\bm{x}\in\F_2^m}\right) = 2^{m-1}$ or $2^{m-1}\pm 2^{m-h-1}$.
    \begin{enumerate}
    \item All weights in the coset $\mathcal{C}(m)+[Q_R(\bm{x})]_{\bm{x}\in\F_2^m,\bm{x}\neq\bm{0}}$ are divisible by $2^{m-h-1}$.
    \item All weights in the coset $\mathcal{C}(m)+[Q_R(\bm{x})]_{\bm{x}\in\F_2^m,\bm{x}\neq \bm{0}}+\bm{1}$ are congruent to $2^{m-h-1}-1$ modulo ${2^{m-h-1}}$.
    \end{enumerate}
	\end{lemma}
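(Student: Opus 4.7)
The plan is to convert Hamming weight into a character sum and then exploit the decomposition induced by the null space of $R$. Writing $2h \coloneqq \mathrm{rank}(R)$ and applying the standard identity
\[
wt_H\bigl([f(\bm{x})]_{\bm{x}\in\F_2^m}\bigr) = 2^{m-1} - \tfrac{1}{2}\sum_{\bm{x}\in\F_2^m}(-1)^{f(\bm{x})},
\]
the lemma reduces to evaluating the character sum $S(\bm{a}) \coloneqq \sum_{\bm{x}\in\F_2^m}(-1)^{L_{\bm{a}}(\bm{x})+Q_R(\bm{x})}$ for each $\bm{a}\in\F_2^m$.

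I would then set $N \coloneqq \ker R$ (of dimension $m-2h$), fix any complement $W$ of dimension $2h$, and decompose $\bm{x} = \bm{n}+\bm{y}$ with $\bm{n}\in N$, $\bm{y}\in W$. Because $R\bm{n}^T = \bm{0}$, the defining identity for $Q_R$ collapses to $Q_R(\bm{n}+\bm{y}) = Q_R(\bm{n}) + Q_R(\bm{y})$; applied inside $N$ it shows that $Q_R|_N$ is $\F_2$-linear, hence equal to $L_{\bm{b}}$ on $N$ for some $\bm{b}$. The sum therefore factors as
\[
S(\bm{a}) = \Bigl(\sum_{\bm{n}\in N}(-1)^{L_{\bm{a}+\bm{b}}(\bm{n})}\Bigr)\Bigl(\sum_{\bm{y}\in W}(-1)^{Q_R(\bm{y})+L_{\bm{a}}(\bm{y})}\Bigr).
\]
The first factor is $2^{m-2h}$ when $\bm{a}+\bm{b}\in N^\perp$ and $0$ otherwise. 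On $W$ the symmetric form induced by $R$ is nondegenerate, so after choosing a symplectic basis $(\bm{e}_i,\bm{f}_i)_{i=1}^{h}$ the exponent splits into a sum over pairs and the whole character sum becomes a product of $h$ independent factors of the form $\sum_{s,t\in\F_2}(-1)^{s\alpha_i+t\beta_i+st} = \pm 2$. Hence the second factor is $\pm 2^h$, which gives $S(\bm{a})\in\{0,\pm 2^{m-h}\}$ and weight $2^{m-1}$ or $2^{m-1}\pm 2^{m-h-1}$, establishing the first display of the lemma.

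Parts (1) and (2) are then arithmetic. Puncturing at $\bm{x}=\bm{0}$ removes a coordinate where $L_{\bm{a}}+Q_R$ vanishes, so the punctured weights coincide with the unpunctured ones; since $2^{m-1} = 2^h\cdot 2^{m-h-1}$ and $2^{m-1}\pm 2^{m-h-1} = (2^h\pm 1)\cdot 2^{m-h-1}$, every weight in $\mathcal{C}(m)+[Q_R(\bm{x})]_{\bm{x}\ne\bm{0}}$ is divisible by $2^{m-h-1}$, which is (1). Adding the all-ones vector of length $2^m-1$ replaces each weight $w$ by $2^m-1-w$; reducing modulo $2^{m-h-1}$ gives $-1\equiv 2^{m-h-1}-1$ uniformly, which is (2). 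The main obstacle I anticipate is the symplectic-basis step on $W$: one has to verify that a Gram–Schmidt-style procedure actually produces a symplectic basis of $\F_2^{2h}$ for $R|_W$ (standard but using nondegeneracy essentially) and then unpack the factorization of the character sum pair by pair. Everything else is linear algebra or routine bookkeeping.
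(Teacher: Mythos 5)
Your argument is correct, but it is not the paper's. Both proofs exploit the same starting observation---that $Q_R$ restricts to a linear map on $V_R=\ker R$ and that the character sum over $V_R$ is $0$ or $2^{m-2h}$---but they diverge at the key step. You decompose $\F_2^m = V_R \oplus W$, factor the full character sum $S(\bm{a})$ as a product of a sum over $V_R$ and a sum over $W$, and then evaluate the $W$-part by choosing a symplectic basis and reducing to $h$ independent factors of $\pm 2$. That is essentially the classical Dickson-normal-form computation, which the paper explicitly says it is providing ``a simple alternative to.'' The paper instead squares the full sum $T_{\bm{a}}=\sum_{\bm{x}\in\F_2^m}(-1)^{Q_R(\bm{x})+L_{\bm{a}}(\bm{x})}$, uses the identity $Q_R(\bm{x}\oplus\bm{y})=Q_R(\bm{x})\oplus Q_R(\bm{y})\oplus \bm{x}R\bm{y}^T$ together with the zero-diagonal property $\bm{y}R\bm{y}^T=0$ to change variables, and observes that the inner sum over $\bm{y}$ vanishes unless $\bm{z}\in V_R$; this yields $T_{\bm{a}}^2 = 2^m S_{\bm{a}}$ directly, so $T_{\bm{a}}\in\{0,\pm 2^{m-h}\}$ without any need to choose a complement $W$, prove nondegeneracy of $R|_W$, construct a symplectic basis, or unpack the pairwise factorization---precisely the ``main obstacle'' you flag. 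Both routes reach the same endpoint, and your handling of parts (1) and (2) by puncturing at $\bm{x}=\bm{0}$ and complementing with $\bm{1}$ is the same as the paper's; the trade-off is that your approach is more structural and makes the extremal values explicit via Arf-type data, while the paper's Gauss-sum squaring is shorter and dodges the linear-algebra bookkeeping entirely.
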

	\begin{proof}
	We calculate the weight distribution $wt_H\left(\left[ L_{\bm{a}(\bm{x}) +Q_R(\bm{x})} \right]_{\bm{x}\in\F_2^m}\right)$ as $L_{\bm{a}}(\bm{x})$ ranges over the space of linear functions. Note that Rank$(R)=2h$ is even. Observe that the restriction of $Q_R(\bm{x})$ to the $(m-2h)$-dimensional space $V_R=\{\bm{x}\in\F_2^m |\bm{x}R=\bm{0}\}$ is a linear map. Hence
	\begin{align}
	     S_{\bm{a}} \coloneqq \sum_{\bm{x}\in V_R} (-1)^{Q_R(\bm{x})+L_{\bm{a}}(\bm{x})} 
	      = \left\{\begin{array}{lc}
         2^{m-2h}, & \text{ if $L_{\bm{a}}(\bm{x}) = Q_R(\bm{x})$ for all $\bm{x}\in V_R$,} \\
         0, & \text{ otherwise.} 
        \end{array} \right. 
	\end{align}
	Let $w_{\bm{a}}=wt_H(\left[ L_{\bm{a}(\bm{x})} +Q_R(\bm{x}) \right]_{\bm{x}\in\F_2^m})$. Then
	\begin{align}
	    2^m-2w_{\bm{a}} \eqqcolon T_{\bm{a}} = \sum_{\bm{x}\in \F_2^m} (-1)^{Q_R(\bm{x})+L_{\bm{a}}(\bm{x})}.
	\end{align}
	We square $T_{\bm{a}}$ to obtain 
	\begin{align}
	    T_{\bm{a}}^2 
	    &= \sum_{\bm{x}\in \F_2^m}\sum_{\bm{y}\in \F_2^m} (-1)^{Q_R(\bm{x})+Q_R(\bm{y})+L_{\bm{a}}(\bm{x})+L_{\bm{a}}(\bm{y})} \\
	    &=\sum_{\bm{x}\in \F_2^m}\sum_{\bm{y}\in \F_2^m} (-1)^{Q_R(\bm{x}\oplus\bm{y})+L_{\bm{a}}(\bm{x}\oplus\bm{y})+\bm{x}R\bm{y}^T}.
	\end{align}
	We change variables and sum over $\bm{z}=\bm{x}\oplus\bm{y}$ and $\bm{y}$. Note that $\bm{x}R\bm{x}^T=0$ for all $\bm{x}\in\F_2^m$ since $R$ has zero diagonal. Then 
	\begin{align}
	    T_{\bm{a}}^2 = \sum_{\bm{z}\in \F_2^m} (-1)^{Q_R(\bm{z})+L_{\bm{a}}(\bm{z})}\sum_{\bm{y}\in\F_2^m}(-1)^{(\bm{y}\oplus\bm{z})R\bm{y}^T}= 2^m S_{\bm{a}}.
	\end{align}
	Hence $T_{\bm{a}}=0$ or $T_{\bm{a}} = \pm 2^{m-h}$, and $w_{\bm{a}} \in \{2^{m-1},2^{m-1}\pm 2^{m-h-1}\}$. Parts $1)$ and $2)$ follow from the observation that $L_{\bm{a}}(\bm{0}) = Q_R(\bm{0}) = 0$, so puncturing on the zero coordinate does not change the weight. 
	\end{proof}
	Suppose $m \ge 4$. Consider the CSS$(X,\mathcal{C}_2;Z,\mathcal{C}_1^\perp,\bm{y}=\bm{0})$ code, where $\mathcal{C}_2 = \mathcal{C}(m)$ is the simplex code of length $n=2^m-1$ and 
	\begin{align}
	    \mathcal{C}_1 = \langle \mathcal{C}_2, [\bm{1}\oplus x_ix_j]_{\bm{x}\in\F_2^m,\bm{x} \neq \bm{0}} \mid 1\le i \le m-4, i<j \rangle.
	\end{align}
	The matrix 
	\begin{align}
	    G_{\mathcal{C}_1/\mathcal{C}_2} = 
	        \left[ \begin{array}{c}
	        \bm{1} \\ 
	        (\bm{1}\oplus x_ix_j)_{\bm{x}\in\F_2^m,\bm{x} \neq \bm{0}}\\ 
	        \dots\\ 
	        \end{array}\right]_{1\le i \le m-4,i<j}
	\end{align}
	generates the $X$-logicals. The minimum distance $d$ is the minimum distance of the Hamming code $\mathcal{C}_2^\perp$, so the parameters of the CSS code are $\llbr n,k= 1+\sum_{i=1}^{m-4}(m-i),d=3 \rrbr$.
	\begin{theorem}\label{thm:new_construction}
	The transversal $T^\dagger$ gate $U_Z = (T^\dagger)^{\otimes n} = \sum_{\bm{u}\in\F_2^n} \left(e^{\imath\frac{\pi}{4}} \right)^{w_H(\bm{u})}\ket{\bm{u}}\bra{\bm{u}}$ preserves the CSS$(X,\mathcal{C}_2;Z,\mathcal{C}_`^\perp,\bm{y}=\bm{0})$ code, inducing the logical operator 
	   \begin{align}
	    U_Z^L
        &= \sum_{\bm{\alpha}\in \F_2^{k}} d_{\bm{\alpha}} \ket{\bm{\alpha}}\bra{\bm{\alpha}}, \text{ where } 
        d_{\bm{\alpha}} = 
         \left\{\begin{array}{lc}
         1, & \text{ if $w_H(\bm{\alpha})$ is even,} \\
         e^{\imath\frac{\pi}{4}}, & \text{ if $w_H(\bm{\alpha})$ is odd,} 
        \end{array} \right. \label{eqn:max_ent_UZ_k}\\
        &\equiv \exp\left(\imath\frac{\pi}{8}Z\otimes Z \otimes \dots \otimes Z \right).\label{eqn:max_ent_UZ_k_exp}
	     \end{align}
	\end{theorem}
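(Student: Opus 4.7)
The plan is to invoke Corollary \ref{coro:whether_invar}, which reduces preservation of the codespace to checking that the diagonal entries $d_{\bm{u}}=(e^{\imath\pi/4})^{w_H(\bm{u})}$ of the transversal $T^\dagger$ gate take the same value on each coset of $\mathcal{C}_2$ in $\mathcal{C}_1$. Since $d_{\bm{u}}$ depends only on $w_H(\bm{u}) \bmod 8$, it suffices to show that all vectors in a given coset share a common Hamming weight modulo $8$. Every coset representative in $\mathcal{C}_1/\mathcal{C}_2$ has the form $[\epsilon\bm{1}\oplus Q_R(\bm{x})]_{\bm{x}\in \F_2^m,\bm{x}\neq\bm{0}}$ with $\epsilon\in\{0,1\}$ and $Q_R$ a sum of monomials $x_ix_j$ satisfying $1\le i\le m-4$ and $i<j$.

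The case $R=\bm{0}$ is handled directly: weights in $\mathcal{C}_2$ are $0$ or $2^{m-1}$, both divisible by $8$ when $m\ge 4$; and weights in $\mathcal{C}_2+\bm{1}$ are $2^m-1$ or $2^{m-1}-1$, both congruent to $-1$ modulo $8$ when $m\ge 4$. When $R\neq \bm{0}$ I will invoke Lemma \ref{lemma:modulo_first_order}: if $\mathrm{rank}(R)=2h$, then all weights in $\mathcal{C}_2+[Q_R]$ are divisible by $2^{m-h-1}$ and all weights in $\mathcal{C}_2+[\bm{1}\oplus Q_R]$ are congruent to $-1$ modulo $2^{m-h-1}$. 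The required congruence modulo $8$ therefore holds as soon as $h\le m-4$.

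The main obstacle is this rank bound. Because every monomial $x_ix_j$ appearing in $Q_R$ has $\min(i,j)\le m-4$, the symmetric zero-diagonal matrix $R$ admits a block decomposition with top-left $(m-4)\times(m-4)$ block $A$, top-right $(m-4)\times 4$ block $B$, and vanishing bottom-right $4\times 4$ block. The last four rows of $R$ therefore lie in $\F_2^{m-4}\times\{\bm{0}\}\subset\F_2^m$ and span at most $\min(4,m-4)$ dimensions, while the first $m-4$ rows contribute at most $m-4$; summing gives $\mathrm{rank}(R)\le (m-4)+\min(4,m-4)\le 2(m-4)$, so $h\le m-4$ as required.

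Finally, I will read off the induced logical gate from the formula $U_Z^L=\sum_{\bm{\alpha}\in\F_2^k} d_{\bm{\alpha} G_{\mathcal{C}_1/\mathcal{C}_2}}\ket{\bm{\alpha}}\bra{\bm{\alpha}}$ supplied by Corollary \ref{coro:whether_invar}. Every row of $G_{\mathcal{C}_1/\mathcal{C}_2}$ contains exactly one copy of $\bm{1}$, so for $\bm{\alpha}\in\F_2^k$ the $\bm{1}$-parity $\epsilon$ of $\bm{\alpha} G_{\mathcal{C}_1/\mathcal{C}_2}$ equals $w_H(\bm{\alpha})\bmod 2$. Combining with the two weight residues from Lemma \ref{lemma:modulo_first_order} shows that $d_{\bm{\alpha}}$ takes one value on even-weight $\bm{\alpha}$ and another on odd-weight $\bm{\alpha}$, producing the diagonal form \eqref{eqn:max_ent_UZ_k}, which (up to a global phase) rewrites as $\exp\!\bigl(\imath\tfrac{\pi}{8}Z^{\otimes k}\bigr)$.
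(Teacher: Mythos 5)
Your proof is correct and follows the same route as the paper's: apply Corollary \ref{coro:whether_invar} to reduce to coset-wise constancy of diagonal entries, use Lemma \ref{lemma:modulo_first_order} together with the rank bound $\mathrm{rank}(R)\le 2(m-4)$ to establish the required weight congruences modulo $8$, and read off the induced logical operator. The paper presents the rank bound merely as an ``observation'' without justification, so your block-matrix argument (vanishing bottom-right $4\times 4$ block, then bounding the row spans of the two row-blocks by $m-4$ and $\min(4,m-4)$) is a welcome filling-in of that step; your separate treatment of the $R=\bm{0}$ case is redundant, since Lemma \ref{lemma:modulo_first_order} with $h=0$ already covers it, and your $\bm{1}$-parity computation for $\bm{\alpha}G_{\mathcal{C}_1/\mathcal{C}_2}$ makes explicit the even/odd $X$-logical correspondence that the paper states only informally.
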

	\begin{remark} \label{rem:less_log}
	  Observe that \eqref{eqn:max_ent_UZ_k} and \eqref{eqn:max_ent_UZ_k_exp} only differ by a global phase $e^{-\imath\pi/8}$ and that \eqref{eqn:max_ent_UZ_k_exp} can be obtained from a single $T$ gate by conjugation, using a sequence of controlled-NOT gates. 
	\end{remark}
	\begin{proof}
	Given $\xi_{i,j}=$ $0$ or $1$ for $1\le i \le m-4$, $i<j$, we observe that the rank of the symplectic matrix $R$ determined by the quadratic form $Q_R(\bm{x})=\sum_{1\le i\le m-4, i<j} \xi_{i,j}x_ix_j$ is at most $2(m-4)$. Even weight $X$-logicals correspond to cosets $\mathcal{C}_2+[Q_R(\bm{x})]_{\bm{x}\in\F_2^m, \bm{x} \neq \bm{0}}$ and odd weight $X$-logicals correspond to cosets $\mathcal{C}_2+[Q_R(\bm{x})]_{\bm{x}\in\F_2^m, \bm{x} \neq \bm{0}} + \bm{1}$. Since $m-(m-4)-1=3$, it follows from Lemma \ref{lemma:modulo_first_order} that all weights in $\mathcal{C}_2+[Q_R(\bm{x})]_{\bm{x}\in\F_2^m, \bm{x} \neq \bm{0}}$ are congruent to $0$ modulo $8$, and that all weights in $\mathcal{C}_2+[Q_R(\bm{x})]_{\bm{x}\in\F_2^m, \bm{x} \neq \bm{0}} + \bm{1}$ are congruent to $7$ modulo $8$. It now follows from Corollary \ref{coro:whether_invar} that the physical transversal gate $U_Z = (T^\dagger)^{\otimes n}$ preserves the CSS code and that the induced logical gate $U_Z^L$ is given by \eqref{eqn:max_ent_UZ_k}. 
	\end{proof}
	
	\begin{remark}
	  The conclusions of Theorem \ref{thm:new_construction} hold for any $\llbr n=2^m-1,1\le k\le 1+\sum_{i=1}^{m-4}(m-i),d=3 \rrbr$ CSS code obtained by deleting rows of the form $(\bm{1}\oplus x_ix_j)_{\bm{x}\in\F_2^m,\bm{x}\neq \bm{0}}$ from $G_{\mathcal{C}_1/\mathcal{C}_2}$.
	\end{remark}
	
    The next Lemma shows that the logical gate $U_Z^L$ given by \eqref{eqn:max_ent_UZ_k} can be decomposed into a $T$-gate on every logical qubit, Controlled-Phase$^\dagger$ on every pair of the logical qubits, and Controlled-Controlled-$Z$ on every triple of logical qubits. 
	\begin{lemma}
	   $\frac{\pi}{4} \binom{k}{1} - \frac{\pi}{2}\binom{k}{2} + \pi\binom{k}{3} =  
   \left\{\begin{array}{lc}
         {0} {\pmod {2\pi}}, & \text{ if $k\ge 1$ is {even}}, \\
         {\frac{\pi}{4}} {\pmod {2\pi}}, & \text{ if $k\ge 1$ is {odd}}.
    \end{array} \right.$
	\end{lemma}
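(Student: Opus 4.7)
The plan is to rescale the identity by a factor of $4/\pi$, reducing it to the integer congruence
\[
\binom{k}{1} - 2\binom{k}{2} + 4\binom{k}{3} \equiv \begin{cases} 0 \pmod{8}, & k \text{ even},\\ 1 \pmod{8}, & k \text{ odd}. \end{cases}
\]
Once this congruence is established, multiplying by $\pi/4$ and reducing modulo $2\pi$ recovers exactly the two values $0$ and $\pi/4$ in the statement.

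To prove the congruence, I would use the one-line identity $(1-2)^k = (-1)^k$ expanded as $\sum_{j=0}^{k} \binom{k}{j}(-2)^j = (-1)^k$. Separating the $j=0$ term and dividing by $-2$ gives
\[
\binom{k}{1} - 2\binom{k}{2} + 4\binom{k}{3} - 8\binom{k}{4} + \cdots \pm 2^{k-1}\binom{k}{k} = \frac{1-(-1)^k}{2},
\]
whose right-hand side is $0$ when $k$ is even and $1$ when $k$ is odd. Every term on the left with index $j \ge 4$ carries a factor of $2^{j-1}$, which is divisible by $8$, so modulo $8$ the entire tail vanishes and the first three terms alone determine the sum modulo $8$. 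This is precisely the claimed congruence.

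I do not foresee any real obstacle: the whole argument is elementary once one recognizes that evaluating $(1-2)^k$ is the right generating identity to exploit. An alternative route would be induction on $k$ using Pascal's rule, verifying that $f(k+1) - f(k) \equiv (-1)^{k} \pmod{8}$ for $f(k) = \binom{k}{1} - 2\binom{k}{2} + 4\binom{k}{3}$, but this requires a parity case analysis at every step whereas the binomial-expansion approach dispatches both parities simultaneously and makes the mod-$8$ truncation completely transparent.
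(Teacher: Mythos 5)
Your proof is correct, and it takes a genuinely different and, I would say, cleaner route than the paper's. The paper expands the three binomial coefficients directly, simplifies to the cubic $\frac{\pi}{12}k(k-2)(2k-5)$, splits into parities $k=2t$ and $k=2t+1$, and then argues divisibility by $2$ and by $3$ from products of (near-)consecutive integers. Your approach instead recognizes $\binom{k}{1}-2\binom{k}{2}+4\binom{k}{3}$ as the truncation of the alternating sum $\sum_{j\ge 1}\binom{k}{j}(-2)^{j-1}=\tfrac{1-(-1)^k}{2}$ obtained from $(1-2)^k=(-1)^k$, and then observes that every discarded term carries a factor $2^{j-1}$ with $j\ge 4$, hence vanishes mod $8$. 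This dispatches both parities in one stroke, avoids the polynomial factorization and the divisibility-by-$3$ lemma, and has the extra virtue of suggesting an obvious generalization: truncating the same series at a deeper index would give the analogous congruence mod $2^{\ell}$ relevant to higher levels of the diagonal Clifford hierarchy. The only caution is the rescaling step (multiplying the real congruence mod $2\pi$ by $4/\pi$ to get an integer congruence mod $8$), which you state correctly but could make a touch more explicit, namely that the map $x\mapsto\tfrac{4}{\pi}x$ is a bijection from $\R/2\pi\Z$ to $\R/8\Z$ carrying $0\mapsto 0$ and $\tfrac{\pi}{4}\mapsto 1$.
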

	\begin{proof}
	When $k=1$, only the first term remains to $\frac{\pi}{4}$. When $k=2$, only the first two terms remain and they sum to $0$. For $k\ge 3$
	  \begin{align}
	      \frac{\pi}{4} \binom{k}{1} - \frac{\pi}{2}\binom{k}{2} + \pi\binom{k}{3} 
	      & = \pi\left( \frac{k}{4} - \frac{k(k-1)}{4} +\frac{k(k-1)(k-2)}{6} \right) \nonumber\\
	      & = \frac{\pi}{12}k(k-2)(2k-5)\nonumber \\
	      & =  \left\{\begin{array}{lc}
         \frac{\pi}{3}t(t-1)(4t-5)= {0} {\pmod {2\pi}}, & \text{ if $k=2t$ for $t\in\Z^{+}$}, \\
         \frac{\pi}{3}t(t-1)(4t+1)+\frac{\pi}{4} = {\frac{\pi}{4}} {\pmod {2\pi}}, & \text{ if $k=2t+1$  for $t\in\Z^{+}$}.
    \end{array} \right.
	  \end{align}
	  Given two integers $t,t-1$, one must be odd and one even. Given three integers $t,t-1,4t+1$ or $t,t-1,4t-5$ exactly one must be divisible by 3. This observation completes the proof.
	\end{proof}
	\begin{example}
	Setting $m=5$, we consider the $\llbr 31,5,3\rrbr$ CSS code preserved by $\left(T^\dagger\right)^{\otimes 31}$. Let 
    \begin{align}
    G_{\mathcal{C}_1} = 
    \left[ \begin{array}{c}
    G_{\mathcal{C}_1/\mathcal{C}_2} \\
    G_{\mathcal{C}_2}
    \end{array}\right],
    \text{ where }
    G_{\mathcal{C}_1/\mathcal{C}_2} = 
     \left[ \begin{array}{c}
     \bm{1} \\
     (\bm{1}\oplus x_1x_i)_{\bm{x}\in\F_2^5, \bm{x}\neq \bm{0}}
     \end{array}\right]_{i=2,\dots,5}
     \text{ and }
     G_{\mathcal{C}_2}
      \left[ \begin{array}{c}
     (x_i)_{\bm{x}\in\F_2^5, \bm{x}\neq \bm{0}}
     \end{array}\right]_{i=1,\dots,5}.
    \end{align}
    If $R_i, i = 2,\dots,5$ is the binary symmetric matrix determined by the quadratic form $x_1x_i$, then every matrix $R$ in $\langle R_i \mid i=2,\dots,5 \rangle$ has rank at most 2. Even weight $X$-logicals determine cosets $\mathcal{C}_2 +[Q_R(\bm{x})]_{\bm{x}\in\F_2^5, \bm{x}\neq \bm{0}}$ and odd weight $X$-logicals determine cosets $\mathcal{C}_2 +[Q_R(\bm{x})]_ + \bm{1}$. As $m-(m-4)-1=3$, Theorem \ref{thm:new_construction} implies $\left(T^\dagger\right)^{\otimes 31}$ preserves the CSS code, and that the induced logical operator is given by \eqref{eqn:max_ent_UZ}.
	\end{example}
	
	  We may obtain an $\llbr n,k,d \rrbr$ CSS code with $d>3$ that is preserved by the transversal $T$ gate, by switching the $X$-stabilizers from the simplex code to the dual of $2$-error-correcting BCH code, or to the punctured Reed-Muller code RM$^*(r,m)$ with higher degree $r\ge 2$. However, to maintain the congruence conditions, we need to increase the number of physical qubits. We may optimize the parameters $n,k,$ and $d$ of the CSS code by choosing different classical component codes. 
	
	\section{Designing Stabilizer Codes in Layers}\label{sec:doubly-encoded}
	In Section \ref{sec:GCs}, we start from a stabilizer code on $N_1$ qubits and derive all possible diagonal physical gates $U_Z'$ on $N_1$ qubits that induce a target logical gate. In Example 3, the unique (up to global phase) physical gate that preserves the $\llbr 5,1,3\rrbr$ code and induces a logical $T$ gate is specified by \eqref{eqn:max_ent_UZ}. 
	
	Section \ref{sec:new_construction} embeds the $N_1$ qubits in a larger physical space of $N_2$ qubits. The $N_1$ qubits become the logical qubits of a stabilizer code on $N_2$ qubits. The code is preserved by a transversal physical diagonal gate on $N_2$ qubits, inducing the operator $U_Z'$ on the $N_1$ code qubits. The transversal diagonal gate on $N_2$ qubits preserves the \emph{outer} code, inducing the target logical operator on the \emph{inner} code. For example, $\left(T^\dagger \right)^{\otimes 31}$ preserves the $5$ logical qubits of the $\llbr 31,5,3\rrbr $ code inducing a logical $T$gate on the inner $\llbr 5,1,3\rrbr$ code. 
	
	The same method applies to the $\llbr 7,1,3\rrbr$ Steane code, where the inner qubits on the $7$ logical qubits of a $\llbr 63,7,3\rrbr$ CSS code (constructed in the Remark \ref{rem:less_log}). Note however, that there are physical gates other than \eqref{eqn:max_ent_UZ_k} that induce a logical $T$ on the $\llbr 7,1,3\rrbr$ Steane code, so it may be possible to improve on the parameters of the outer code.
	
	Figure \ref{fig:two_layer} describes this method of designing stabilizer codes in two layers. What makes it feasible is the bridge between physical and logical quantum domains created by generator coefficients.  
	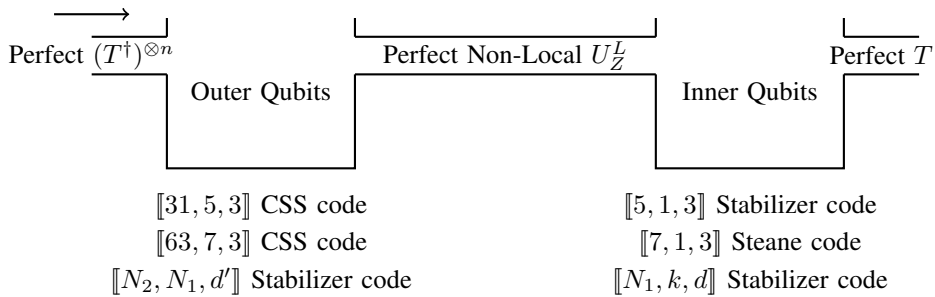
\begin{figure}[h!]
	    \centering
	       \begin{tikzpicture}
        \path[draw,thick] (0,0) -- (0,-0.25);
        \path[draw,thick] (0,-0.75) -- (0,-2) -- (2.5,-2) -- (2.5,-0.75); 
        \path[draw,thick] (0,-0.25) -- (-1,-0.25);
    	\path[draw,thick] (0,-0.75) -- (-1,-0.75);
    	\path[draw,thick] (2.5,-0.25) -- (6.5,-0.25);
    	\path[draw,thick] (2.5,-0.75) -- (6.5,-0.75);
    	\path[draw,thick] (2.5,-0.25) -- (2.5,0);
    	\node (d1) at (1.25,-1) {Outer Qubits};
    	\node (d2) at (1.25,-2.5) {$\llbr 31,5,3 \rrbr$ CSS code};
    	\node (d2') at (1.25,-3) {$\llbr 63,7,3 \rrbr$ CSS code};
        \node (d2'') at (1.25,-3.5) {$\llbr N_2,N_1,d' \rrbr$ Stabilizer code};
    	    	
    	\node (d) at (-1,-0.5) {Perfect $(T^\dagger)^{\otimes n}$};
    	\node (dd) at (4.5,-0.5) {Perfect Non-Local $U_Z^L$};
    	\path[draw,thick] (6.5,0) -- (6.5,-0.25);
        \path[draw,thick] (6.5,-0.75) -- (6.5,-2) -- (9,-2) -- (9,-0.75); 
        \path[draw,thick] (9,-0.25) -- (10,-0.25);
    	\path[draw,thick] (9,-0.75) -- (10,-0.75);
    	\path[draw,thick] (9,-0.25) -- (9,0);
    	\node (d3) at (7.75,-1) {Inner Qubits};   	
        \node (d4) at (9.5,-0.5) {Perfect $T$};
        \path[draw,thick,->] (-1.5,0.05) -- (-0.5,0.05);
        \node (d5) at (7.75,-2.5) {$\llbr 5,1,3 \rrbr$ Stabilizer code};
        \node (d5') at (7.75,-3) {$\llbr 7,1,3 \rrbr$ Steane code};
        \node (d5'') at (7.75,-3.5) {$\llbr N_1,k,d \rrbr$ Stabilizer code};
    \end{tikzpicture}
	    \caption{Configuring outer and inner qubits so that transversal $T^\dagger$ gate on outer qubits induces a logical $T$ gate on the inner qubit.}
	    \label{fig:two_layer}
	\end{figure}
	
	
	It is often possible to assemble a universal set of fault-tolerant gates on the inner qubits by bringing together transversal gates on both outer and inner qubits. It may be useful to view concatenation of the $\llbr 31,5,3\rrbr$ code and the $\llbr 5,1,3\rrbr$ code as factorization of a $\llbr 31,1,3\rrbr$ triorthogonal code. Factorization is similar to code switching \cite{anderson2014fault,paetznick2013universal,bombin2015gauge,hill2013fault} in that it can produce a universal set of gates without requiring teleportation of magic states. However, inducing a logical gate on the inner code does require encoding/decoding algorithms to pass between outer and inner codes. The overhead of factorization depends on the complexity of these algorithms. It is also interesting to optimize gate synthesis for a given quantum algorithm by minimizing the number of code switches.

	\section{Discussion}
	\label{sec:conclusion}
	Given a CSS($X$, $\mathcal{C}_2$; $Z$, $\mathcal{C}_1^\perp, \bm{y}$) code, we have used the mathematical framework of generator coefficients to characterize all physical diagonal gates that induce a target logical gate. When the logical gate is the identity, the physical gates represent types of noise to which the CSS code is oblivious (see \cite{hu2022mitigating}). Diagonal entries of the physical gate outside $\mathcal{C}_1+\bm{y}$ are unconstrained, and diagonal entries from the same coset of $(\mathcal{C}_2$ in $\mathcal{C}_1)+\bm{y}$ are required to be equal. Our framework is quite general, it includes CSS codes that are preserved by transversal $T$, also the hybrid codes of Vasmer and Kubica \cite{vasmer2021morphing}.
	
	The connection between physical and logical domains enables analysis of fault tolerance. We begin by observing that when a transversal $(d-1)$-local physical gate preserves an $\llbr n,k,d \rrbr$ CSS code, the induced logical operator is at least error-detectable. Our framework provides visibility into cases where a physical gate with higher locality still induces a fault-tolerant logical gate. Vasmer and Kubica \cite{vasmer2021morphing} have described examples where $2$-local physical gates induce a fault-tolerant Phase gate on the $\llbr 5,1,2\rrbr$ code, and where $3$-local physical gates induce a fault-tolerant $T$-gate on the $\llbr 10,1,2\rrbr$ code. Our framework shows fault-tolerance is preserved when the local support of a physical gate is not contained in the support of any undetectable error. It is easy to check this condition on the generator matrix $G_{\mathcal{C}_2}$ of $X$-stabilizers: the matrix obtained by puncturing $G_{\mathcal{C}_2}$ on the support of an $r$-local gate must be full rank.
	
	The constraints we derived on diagonal entries of a physical gate motivate the design of CSS codes where the component codes $\mathcal{C}_1$, $\mathcal{C}_2$ are classical divisible codes. We have introduced a new family of CSS codes defined by quadratic forms. The family motivates the design of stabilizer codes in layers, a way of producing a universal set of gates without requiring teleportation of magic states.

	\section*{Acknowledgement}
	This work was supported in part by the National Science Foundation (NSF) under Grant CCF-2106213 and Grant CCF-1908730. 
	
	\bibliographystyle{IEEEtran}
	\bibliography{bibliography}

\begin{thebibliography}{10}
\providecommand{\url}[1]{#1}
\csname url@samestyle\endcsname
\providecommand{\newblock}{\relax}
\providecommand{\bibinfo}[2]{#2}
\providecommand{\BIBentrySTDinterwordspacing}{\spaceskip=0pt\relax}
\providecommand{\BIBentryALTinterwordstretchfactor}{4}
\providecommand{\BIBentryALTinterwordspacing}{\spaceskip=\fontdimen2\font plus
\BIBentryALTinterwordstretchfactor\fontdimen3\font minus
  \fontdimen4\font\relax}
\providecommand{\BIBforeignlanguage}[2]{{%
\expandafter\ifx\csname l@#1\endcsname\relax
\typeout{** WARNING: IEEEtran.bst: No hyphenation pattern has been}%
\typeout{** loaded for the language `#1'. Using the pattern for}%
\typeout{** the default language instead.}%
\else
\language=\csname l@#1\endcsname
\fi
#2}}
\providecommand{\BIBdecl}{\relax}
\BIBdecl

\bibitem{gottesman1997stabilizer}
\BIBentryALTinterwordspacing
D.~Gottesman, \emph{Stabilizer codes and quantum error correction}.\hskip 1em
  plus 0.5em minus 0.4em\relax California Institute of Technology, 1997.
  [Online]. Available: \url{https://arxiv.org/abs/quant-ph/9705052}
\BIBentrySTDinterwordspacing

\bibitem{Calderbank-physreva96}
\BIBentryALTinterwordspacing
A.~R. Calderbank and P.~W. Shor, ``Good quantum error-correcting codes exist,''
  \emph{Phys. Rev. A}, vol.~54, pp. 1098--1105, Aug 1996. [Online]. Available:
  \url{https://arxiv.org/abs/quant-ph/9512032}
\BIBentrySTDinterwordspacing

\bibitem{calderbank1997quantum}
\BIBentryALTinterwordspacing
A.~R. Calderbank, E.~M. Rains, P.~W. Shor, and N.~J. Sloane, ``Quantum error
  correction and orthogonal geometry,'' \emph{Phys. Rev. Lett.}, vol.~78,
  no.~3, p. 405, 1997. [Online]. Available:
  \url{https://arxiv.org/abs/quant-ph/9605005}
\BIBentrySTDinterwordspacing

\bibitem{calderbank1998quantum}
\BIBentryALTinterwordspacing
A.~R. Calderbank, E.~M. Rains, P.~Shor, and N.~J. Sloane, ``Quantum error
  correction via codes over {GF}(4),'' \emph{IEEE Trans. Inf. Theory}, vol.~44,
  no.~4, pp. 1369--1387, 1998. [Online]. Available:
  \url{https://arxiv.org/abs/quant-ph/9608006}
\BIBentrySTDinterwordspacing

\bibitem{bravyi2005universal}
\BIBentryALTinterwordspacing
S.~Bravyi and A.~Kitaev, ``Universal quantum computation with ideal clifford
  gates and noisy ancillas,'' \emph{Phys. Rev. A}, vol.~71, no.~2, p. 022316,
  2005. [Online]. Available: \url{https://arxiv.org/abs/quant-ph/0403025}
\BIBentrySTDinterwordspacing

\bibitem{bravyi2012magic}
\BIBentryALTinterwordspacing
S.~Bravyi and J.~Haah, ``Magic-state distillation with low overhead,''
  \emph{Phys. Rev. A}, vol.~86, no.~5, p. 052329, 2012. [Online]. Available:
  \url{https://arxiv.org/abs/1209.2426}
\BIBentrySTDinterwordspacing

\bibitem{campbell2012magic}
\BIBentryALTinterwordspacing
E.~T. Campbell, H.~Anwar, and D.~E. Browne, ``Magic-state distillation in all
  prime dimensions using quantum {R}eed-{M}uller codes,'' \emph{Phys. Rev. X},
  vol.~2, no.~4, p. 041021, 2012. [Online]. Available:
  \url{https://arxiv.org/abs/1205.3104}
\BIBentrySTDinterwordspacing

\bibitem{landahl2013complex}
\BIBentryALTinterwordspacing
A.~J. Landahl and C.~Cesare, ``Complex instruction set computing architecture
  for performing accurate quantum $ z $ rotations with less magic,''
  \emph{arXiv preprint arXiv:1302.3240}, 2013. [Online]. Available:
  \url{https://arxiv.org/abs/1302.3240}
\BIBentrySTDinterwordspacing

\bibitem{Rengaswamy-pra19}
\BIBentryALTinterwordspacing
N.~Rengaswamy, R.~Calderbank, and H.~D. Pfister, ``Unifying the {C}lifford
  hierarchy via symmetric matrices over rings,'' \emph{Phys. Rev. A}, vol. 100,
  no.~2, p. 022304, 2019. [Online]. Available:
  \url{http://arxiv.org/abs/1902.04022}
\BIBentrySTDinterwordspacing

\bibitem{rengaswamy2020optimality}
\BIBentryALTinterwordspacing
N.~Rengaswamy, R.~Calderbank, M.~Newman, and H.~D. Pfister, ``On optimality of
  {CSS} codes for transversal {$T$},'' \emph{IEEE J. Sel. Areas in Inf.
  Theory}, vol.~1, no.~2, pp. 499--514, 2020. [Online]. Available:
  \url{https://arxiv.org/abs/1910.09333}
\BIBentrySTDinterwordspacing

\bibitem{hu2022mitigating}
\BIBentryALTinterwordspacing
J.~Hu, Q.~Liang, N.~Rengaswamy, and R.~Calderbank, ``Mitigating coherent noise
  by balancing weight-2 {Z}-stabilizers,'' \emph{IEEE Trans. Inf. Theory},
  vol.~68, no.~3, pp. 1795--1808, 2022. [Online]. Available:
  \url{https://arxiv.org/abs/2011.00197}
\BIBentrySTDinterwordspacing

\bibitem{hu2021css}
------, ``C{SS} codes that are oblivious to coherent noise,'' in \emph{Proc.
  IEEE Int. Symp. Inf. Theory (ISIT)}, July 2021, pp. 1481--1486.

\bibitem{Ward}
H.~N. Ward, ``Divisible codes – a survey,'' \emph{Serdica Math. J.}, vol. 27
  (4), pp. 263--278, 2001.

\bibitem{golomb2005signal}
S.~W. Golomb and G.~Gong, \emph{Signal design for good correlation: for
  wireless communication, cryptography, and radar}.\hskip 1em plus 0.5em minus
  0.4em\relax Cambridge University Press, 2005.

\bibitem{zeng2008semi}
\BIBentryALTinterwordspacing
B.~Zeng, X.~Chen, and I.~L. Chuang, ``Semi-{C}lifford operations, structure of
  $\mathcal{C}_k$ hierarchy, and gate complexity for fault-tolerant quantum
  computation,'' \emph{Phys. Rev. A}, vol.~77, no.~4, p. 042313, 2008.
  [Online]. Available: \url{https://arxiv.org/abs/0712.2084}
\BIBentrySTDinterwordspacing

\bibitem{haah2018towers}
\BIBentryALTinterwordspacing
J.~Haah, ``Towers of generalized divisible quantum codes,'' \emph{Phys. Rev.
  A}, vol.~97, no.~4, p. 042327, 2018. [Online]. Available:
  \url{https://arxiv.org/abs/1709.08658}
\BIBentrySTDinterwordspacing

\bibitem{delsarte1973four}
P.~Delsarte, ``Four fundamental parameters of a code and their combinatorial
  significance,'' \emph{Inf. Control}, vol.~23, no.~5, pp. 407--438, 1973.

\bibitem{calderbank1986geometry}
R.~Calderbank and W.~M. Kantor, ``The geometry of two-weight codes,'' \emph{J.
  London Math. Soc.}, vol.~18, no.~2, pp. 97--122, 1986.

\bibitem{kohnert2007constructing}
A.~Kohnert, ``Constructing two-weight codes with prescribed groups of
  automorphisms,'' \emph{Discret. Appl. Math.}, vol. 155, no.~11, pp.
  1451--1457, 2007.

\bibitem{ding2015class}
\BIBentryALTinterwordspacing
K.~Ding and C.~Ding, ``A class of two-weight and three-weight codes and their
  applications in secret sharing,'' \emph{IEEE Trans. Inf. Theory}, vol.~61,
  no.~11, pp. 5835--5842, 2015. [Online]. Available:
  \url{https://arxiv.org/abs/1503.06512}
\BIBentrySTDinterwordspacing

\bibitem{kiermaier2020lengths}
\BIBentryALTinterwordspacing
M.~Kiermaier and S.~Kurz, ``On the lengths of divisible codes,'' \emph{IEEE
  Trans. Inf. Theory}, vol.~66, no.~7, pp. 4051--4060, 2020. [Online].
  Available: \url{https://arxiv.org/abs/1707.00650}
\BIBentrySTDinterwordspacing

\bibitem{kurz2021divisible}
\BIBentryALTinterwordspacing
S.~Kurz, ``Divisible codes,'' \emph{arXiv preprint arXiv:2112.11763}, 2021.
  [Online]. Available: \url{https://arxiv.org/abs/2112.11763}
\BIBentrySTDinterwordspacing

\bibitem{macwilliams1977theory}
F.~J. MacWilliams and N.~J.~A. Sloane, \emph{The theory of error correcting
  codes}.\hskip 1em plus 0.5em minus 0.4em\relax Elsevier, 1977, vol.~16.

\bibitem{eastin2009restrictions}
\BIBentryALTinterwordspacing
B.~Eastin and E.~Knill, ``Restrictions on transversal encoded quantum gate
  sets,'' \emph{Phys. Rev. Lett.}, vol. 102, no.~11, p. 110502, 2009. [Online].
  Available: \url{https://arxiv.org/abs/0811.4262}
\BIBentrySTDinterwordspacing

\bibitem{zeng2011transversality}
\BIBentryALTinterwordspacing
B.~Zeng, A.~Cross, and I.~L. Chuang, ``Transversality versus universality for
  additive quantum codes,'' \emph{IEEE Trans. Inf. Theory}, vol.~57, no.~9, pp.
  6272--6284, 2011. [Online]. Available: \url{https://arxiv.org/abs/0706.1382}
\BIBentrySTDinterwordspacing

\bibitem{gottesman1999demonstrating}
\BIBentryALTinterwordspacing
D.~Gottesman and I.~L. Chuang, ``Demonstrating the viability of universal
  quantum computation using teleportation and single-qubit operations,''
  \emph{Nature}, vol. 402, no. 6760, pp. 390--393, 1999. [Online]. Available:
  \url{https://arxiv.org/abs/quant-ph/9908010}
\BIBentrySTDinterwordspacing

\bibitem{zhou2000methodology}
\BIBentryALTinterwordspacing
X.~Zhou, D.~W. Leung, and I.~L. Chuang, ``Methodology for quantum logic gate
  construction,'' \emph{Phys. Rev. A}, vol.~62, no.~5, p. 052316, 2000.
  [Online]. Available: \url{https://arxiv.org/abs/quant-ph/0002039}
\BIBentrySTDinterwordspacing

\bibitem{reichardt2005quantum}
\BIBentryALTinterwordspacing
B.~W. Reichardt, ``Quantum universality from magic states distillation applied
  to {CSS} codes,'' \emph{Quantum Inf. Process}, vol.~4, no.~3, pp. 251--264,
  2005. [Online]. Available: \url{https://arxiv.org/abs/quant-ph/0411036}
\BIBentrySTDinterwordspacing

\bibitem{anwar2012qutrit}
\BIBentryALTinterwordspacing
H.~Anwar, E.~T. Campbell, and D.~E. Browne, ``Qutrit magic state
  distillation,'' \emph{New J. Phys.}, vol.~14, no.~6, p. 063006, 2012.
  [Online]. Available: \url{https://arxiv.org/abs/1202.2326}
\BIBentrySTDinterwordspacing

\bibitem{campbell2017unified}
\BIBentryALTinterwordspacing
E.~T. Campbell and M.~Howard, ``Unified framework for magic state distillation
  and multiqubit gate synthesis with reduced resource cost,'' \emph{Phys. Rev.
  A}, vol.~95, no.~2, p. 022316, 2017. [Online]. Available:
  \url{https://arxiv.org/abs/1606.01904}
\BIBentrySTDinterwordspacing

\bibitem{haah2018codes}
\BIBentryALTinterwordspacing
J.~Haah and M.~B. Hastings, ``Codes and protocols for distilling $ t $,
  controlled-$ s $, and toffoli gates,'' \emph{Quantum}, vol.~2, p.~71, 2018.
  [Online]. Available: \url{https://arxiv.org/abs/1709.02832}
\BIBentrySTDinterwordspacing

\bibitem{krishna2019towards}
\BIBentryALTinterwordspacing
A.~Krishna and J.-P. Tillich, ``Towards low overhead magic state
  distillation,'' \emph{Phys. Rev. Lett.}, vol. 123, no.~7, p. 070507, 2019.
  [Online]. Available: \url{https://arxiv.org/abs/1811.08461}
\BIBentrySTDinterwordspacing

\bibitem{vuillot2019quantum}
\BIBentryALTinterwordspacing
C.~Vuillot and N.~P. Breuckmann, ``Quantum pin codes,'' \emph{arXiv preprint
  arXiv:1906.11394}, 2019. [Online]. Available:
  \url{https://arxiv.org/abs/1906.11394}
\BIBentrySTDinterwordspacing

\bibitem{vasmer2021morphing}
\BIBentryALTinterwordspacing
M.~Vasmer and A.~Kubica, ``Morphing quantum codes,'' \emph{arXiv preprint
  arXiv:2112.01446}, 2021. [Online]. Available:
  \url{https://arxiv.org/abs/2112.01446}
\BIBentrySTDinterwordspacing

\bibitem{hill2013fault}
\BIBentryALTinterwordspacing
C.~D. Hill, A.~G. Fowler, D.~S. Wang, and L.~C. Hollenberg, ``Fault-tolerant
  quantum error correction code conversion,'' \emph{Quantum Inf. Comput.},
  vol.~13, no. 5-6, pp. 439--451, 2013. [Online]. Available:
  \url{https://arxiv.org/abs/1112.2417}
\BIBentrySTDinterwordspacing

\bibitem{anderson2014fault}
\BIBentryALTinterwordspacing
J.~T. Anderson, G.~Duclos-Cianci, and D.~Poulin, ``Fault-tolerant conversion
  between the {S}teane and {R}eed-{M}uller quantum codes,'' \emph{Phys. Rev.
  Lett.}, vol. 113, no.~8, p. 080501, 2014. [Online]. Available:
  \url{https://arxiv.org/abs/1403.2734}
\BIBentrySTDinterwordspacing

\bibitem{paetznick2013universal}
\BIBentryALTinterwordspacing
A.~Paetznick and B.~W. Reichardt, ``Universal fault-tolerant quantum
  computation with only transversal gates and error correction,'' \emph{Phys.
  Rev. Lett.}, vol. 111, no.~9, p. 090505, 2013. [Online]. Available:
  \url{https://arxiv.org/abs/1304.3709}
\BIBentrySTDinterwordspacing

\bibitem{bombin2015gauge}
\BIBentryALTinterwordspacing
H.~Bomb{\'\i}n, ``Gauge color codes: optimal transversal gates and gauge fixing
  in topological stabilizer codes,'' \emph{New J. Phys.}, vol.~17, no.~8, p.
  083002, 2015. [Online]. Available: \url{https://arxiv.org/abs/1311.0879}
\BIBentrySTDinterwordspacing

\bibitem{jochym2014using}
\BIBentryALTinterwordspacing
T.~Jochym-O’Connor and R.~Laflamme, ``Using concatenated quantum codes for
  universal fault-tolerant quantum gates,'' \emph{Phys. Rev. Lett.}, vol. 112,
  no.~1, p. 010505, 2014. [Online]. Available:
  \url{https://arxiv.org/abs/1309.3310}
\BIBentrySTDinterwordspacing

\bibitem{knill1996accuracy}
\BIBentryALTinterwordspacing
E.~Knill, R.~Laflamme, and W.~Zurek, ``Accuracy threshold for quantum
  computation,'' 1996. [Online]. Available:
  \url{https://arxiv.org/abs/quant-ph/9610011}
\BIBentrySTDinterwordspacing

\bibitem{yoder2016universal}
\BIBentryALTinterwordspacing
T.~J. Yoder, R.~Takagi, and I.~L. Chuang, ``Universal fault-tolerant gates on
  concatenated stabilizer codes,'' \emph{Phys. Rev. X}, vol.~6, no.~3, p.
  031039, 2016. [Online]. Available:
  \url{https://journals.aps.org/prx/abstract/10.1103/PhysRevX.6.031039}
\BIBentrySTDinterwordspacing

\bibitem{Gottesman-icgtmp98}
\BIBentryALTinterwordspacing
D.~Gottesman, ``The {H}eisenberg representation of quantum computers,'' in
  \emph{Intl. Conf. on Group Theor. Meth. Phys.}\hskip 1em plus 0.5em minus
  0.4em\relax International Press, Cambridge, MA, 1998, pp. 32--43. [Online].
  Available: \url{https://arxiv.org/abs/quant-ph/9807006}
\BIBentrySTDinterwordspacing

\bibitem{boykin1999universal}
\BIBentryALTinterwordspacing
P.~O. Boykin, T.~Mor, M.~Pulver, V.~Roychowdhury, and F.~Vatan, ``On universal
  and fault-tolerant quantum computing: a novel basis and a new constructive
  proof of universality for shor's basis,'' in \emph{40th Annu. Symp. Found.
  Comput. Sci. (Cat. No.99CB37039)}.\hskip 1em plus 0.5em minus 0.4em\relax
  IEEE, 1999, pp. 486--494. [Online]. Available:
  \url{https://arxiv.org/abs/quant-ph/9906054}
\BIBentrySTDinterwordspacing

\bibitem{nebe2001invariants}
\BIBentryALTinterwordspacing
G.~Nebe, E.~M. Rains, and N.~J. Sloane, ``The invariants of the clifford
  groups,'' \emph{Des. Codes Cryptogr.}, vol.~24, no.~1, pp. 99--122, 2001.
  [Online]. Available: \url{https://arxiv.org/abs/math/0001038}
\BIBentrySTDinterwordspacing

\bibitem{hu2021designing}
\BIBentryALTinterwordspacing
J.~Hu, Q.~Liang, and R.~Calderbank, ``Designing the quantum channels induced by
  diagonal gates,'' \emph{arXiv preprint arXiv:2109.13481}, 2021. [Online].
  Available: \url{https://arxiv.org/abs/2109.13481}
\BIBentrySTDinterwordspacing

\bibitem{ouyang2021avoiding}
\BIBentryALTinterwordspacing
Y.~Ouyang, ``Avoiding coherent errors with rotated concatenated stabilizer
  codes,'' \emph{Npj Quantum Inf.}, vol.~7, no.~1, pp. 1--7, 2021. [Online].
  Available: \url{https://www.nature.com/articles/s41534-021-00429-8}
\BIBentrySTDinterwordspacing

\bibitem{Ax}
J.~Ax, ``Zeroes of polynomials over finite fields,'' \emph{Am. J. Math.},
  vol.~86, pp. 255--261, 1964.

\bibitem{McEliece}
R.~J. McEliece, ``Weight congruences for $p$-ary cyclic codes,'' \emph{Discrete
  Math.}, vol.~3, pp. 177--192, 1972.

\end{thebibliography}

	

\end{document}